%% file: main.tex
\documentclass{article} 
\PassOptionsToPackage{numbers,compress}{natbib}
\usepackage{iclr2027_conference,times}
\iclrfinalcopy
\setcitestyle{numbers,square,comma}
\usepackage{titlesec}
\titlespacing*{\section}{0pt}{1.5ex plus 0.5ex minus 0.2ex}{0.8ex}
\titlespacing*{\subsection}{0pt}{1.2ex plus 0.5ex minus 0.2ex}{0.5ex}
\titlespacing*{\subsubsection}{0pt}{1ex plus 0.5ex minus 0.2ex}{0.3ex}
\usepackage[utf8]{inputenc}
\usepackage[T1]{fontenc}
\usepackage{microtype}

\input{math_commands.tex}

\usepackage{algorithm}
\usepackage{algorithmic}
\usepackage{makecell}
\usepackage{hyperref}
\usepackage{url}
\usepackage{graphicx}
\usepackage{booktabs}
\usepackage{multirow}
\usepackage{graphicx}
\usepackage{subcaption}
\usepackage{amsthm}
\newtheorem{thm}{Theorem}[section]
\newtheorem{prp}[thm]{Proposition}
\newtheorem{lmm}[thm]{Lemma}
\newtheorem{asm}[thm]{Assumption}
\title {Bayesian Offline Multi-Agent Reinforcement Learning with Latent World Models}

\author{
Xi Xiong$^{1,3}$ \thanks{Work done while visiting the University of Oxford.} \quad Lu Liu$^{3}$ \quad Yiming Li$^{3}$ \quad Jakob Foerster$^{1,2}$ \\[0.5em]
\normalfont $^{1}$\,British Open-ended Learning and Discovery Lab (BOLD) \\[0.2em]
\normalfont $^{2}$\,University of Oxford \\[0.2em]
\normalfont $^{3}$\,Tongji University
}

\begin{document}

\maketitle
\fancyhead[L]{Preprint, October 2026.}

\begin{abstract}
Model-based offline multi-agent reinforcement learning (MARL) provides a promising approach to multi-agent decision-making under partial observability, where the global state is unavailable from offline data. However, limited dataset coverage can lead to substantial out-of-distribution (OOD) model errors, which may mislead policy optimization and degrade downstream performance. Existing model-based offline MARL methods typically rely on access to global states or mitigate model errors through pessimistic or conservative policy optimization, leaving the reliability and refinement of learned world models less explored. We propose a Bayesian offline MARL framework that infers latent global states from decentralized observation-action histories and iteratively refines a posterior world model using only offline data. Building on posterior information loss, we extend this criterion to the multi-agent setting as multi-agent posterior information loss (MPIL), which provides a reliability measure for world model refinement and offline evaluation. Theoretically, we establish an upper bound on policy regret that is monotonic in MPIL and further characterize its dependence on the size and state–joint-action coverage of the offline dataset. Experiments on partially observable multi-agent variants of D4RL MuJoCo tasks demonstrate competitive policy performance across diverse offline data conditions, while MPIL-guided refinement improves world model prediction and downstream policy performance. These results support a principled connection between world model reliability, offline data coverage, and policy performance in partially observable offline MARL.
\end{abstract}

\input{./sections/01_Introduction}
\input{./sections/02_Preliminaries}

\input{./sections/03_Reviews}
\input{./sections/04_Bayesian}
\input{./sections/05_Regret}
\input{./sections/07_Experiments}

\input{./sections/08_Conclusion}

\subsection*{AI use statement}

In this work, we used generative AI tools to assist with language editing, manuscript organization and presentation, clarification and checking of mathematical derivations and theoretical exposition, and interpretation of experimental results. Generative AI tools were also used to assist with LaTeX formatting and technical writing. All AI-assisted theoretical statements, mathematical derivations, and experimental analyses were independently reviewed and verified by the authors. Experimental data, results, and conclusions were produced and validated by the authors, and all AI-assisted text was manually reviewed and revised. Generative AI tools were not used to generate or alter experimental data or results. We take full responsibility for the final content of this work, including all text, claims, and artifacts produced with the aid of generative AI.




\subsection*{Ethics statement}

This study does not involve human participants, personally identifiable information, or sensitive private data. 



\subsection*{Reproducibility statement}

We provide detailed descriptions of the proposed framework, model formulation, training procedure, and experimental settings in the main paper and appendix. The key hyperparameters, network configurations, offline data generation process, baseline implementations, and evaluation metrics are reported to facilitate reproduction of the experimental results. Complete derivations and proofs of the theoretical results are also provided in the appendix.




\bibliography{iclr2027_conference}
\bibliographystyle{unsrtnat}

\appendix
\input{./sections/09_Appendix}

\end{document}

%% file: math_commands.tex
\usepackage{amsmath,amsfonts,bm}

\def\eqref#1{equation~\ref{#1}}

\def\1{\bm{1}}

\DeclareMathAlphabet{\mathsfit}{\encodingdefault}{\sfdefault}{m}{sl}
\SetMathAlphabet{\mathsfit}{bold}{\encodingdefault}{\sfdefault}{bx}{n}



%% file: sections/01_Introduction.tex
\section{Introduction}
Multi-agent reinforcement learning (MARL) has achieved remarkable success in complex cooperative tasks \citep{lowe2017multi}. However, conventional MARL typically requires extensive online interaction during training, which can incur substantial safety risks and trial-and-error costs when a sufficiently accurate simulator is unavailable. Offline MARL addresses this limitation by learning policies exclusively from historical datasets and has been widely studied in fields such as healthcare, industrial automation, and autonomous driving \citep{emerson2023offline, ma2024improving, hu2025coordinating, lee2025scenario}. Nevertheless, in many practical partially observable multi-agent systems, the global state is inaccessible, leaving agents with only decentralized local observations. This information deficiency undermines centralized policy learning, since local observations may be insufficient to characterize the underlying environment and support reliable coordination \citep{xu2022side}. Therefore, how to learn effective policies solely from offline data when global states are unavailable remains a fundamental challenge in offline MARL.

Model-based offline RL offers a natural paradigm for fully offline policy learning and evaluation, and its core ideas have also been extended to multi-agent settings. By learning environment dynamics from offline data, a world model can serve as a surrogate environment for policy optimization and evaluation. However, limited data coverage can make the learned model unreliable, causing policy optimization to exploit model errors in out-of-distribution (OOD) regions. Existing methods mainly address this issue through pessimistic or conservative policy optimization \citep{yu2020mopo,kidambi2020morel,yu2021combo,barde2024moma}. While effective, these methods mainly mitigate the impact of model errors during policy optimization, rather than improving the world model itself; moreover, optimizing policies against an unreliable model can incur substantial and avoidable computational cost. Moreover, offline RL lacks a principled criterion linking world model reliability to policy performance and safety. SOReL \citep{fellows2026fully} takes an important step by introducing posterior information loss (PIL), which links world model discrepancy to policy regret, although it is limited to the single-agent setting and does not employ PIL for world model refinement.

To address these challenges, we propose a latent world model framework for fully offline MARL under decentralized partial observability. The framework infers latent global states from agents' local observation-action histories and models posterior uncertainty over transition, reward, and observation dynamics. Building on this model, we introduce multi-agent posterior information loss (MPIL) to quantify world model reliability and further integrate empirical MPIL into world model learning as a fine-tuning objective and reliability criterion. Starting from an initially trained world model, the framework iteratively refines the model until convergence, after which the resulting world model is used for policy learning and fully offline regret estimation, forming a closed loop of world model refinement, policy learning, and performance assessment without further interaction with the true environment. 

Within this framework, we prove that the true regret is upper-bounded by a monotonic function of MPIL, and further characterize its dependence on offline dataset size and state–joint-action coverage. These results establish a unified theoretical connection among world model reliability, offline data availability, and policy-performance guarantees. We validate the proposed theory and framework on partially observable multi-agent MuJoCo \citep{dewitt2020deep} benchmarks, where MPIL-guided world model refinement improves model prediction accuracy and policy performance, and the regret estimated from the refined world model follows a trend consistent with the true regret as the amount of offline data increases.

%% file: sections/02_Preliminaries.tex
\section{Preliminaries}
\label{sec:preliminaries}

The decentralized partially observable Markov decision process (Dec-POMDP) provides a general framework for modeling cooperative multi-agent decision-making problems in which each agent has access only to local observations, while all agents share a common system-level objective \citep{bernstein2002complexity,bellman1956problem}. A Dec-POMDP is formally defined by the tuple $
    G=\left\langle \mathcal{I}, S,\left\{A_{i}\right\}, T, R,\left\{Z_{i}\right\},\left\{O_{i}\right\}, \gamma\right\rangle$.
Here, $\mathcal{I}=\{1,2,\ldots,I\}$ denotes the finite set of agents, and $S$ denotes the environment state space with an initial state distribution $P_{0}$. 
At each time step $t=0,1,2,\ldots$, the environment occupies a global state $s_t\in S$, and each agent $i \in \mathcal{I}$ selects an action $a_{i,t}$ from its action space $A_i$ according to its policy $\pi_i$.
The individual actions form the joint action $\tilde{a}_t=(a_{1,t},\ldots,a_{I,t})$.
The transition function $T(s_{t+1}|s_t,\tilde{a}_t)=\operatorname{Pr}\left(s_{t+1}\mid s_t,\tilde{a}_t\right)$ characterizes the stochastic evolution of the environment. The reward function is defined as $R(r_t|s_t,\tilde{a}_t)=\operatorname{Pr}(r_t\mid s_t,\tilde{a}_t)$, where the resulting global reward is shared among all agents.
Under partial observability, each agent $i$ receives a local observation $z_{i,t}$ from its observation space $Z_i$ rather than directly accessing the global state $s_t$, where the observation function $O_i(z_{i,t+1}|s_t,\tilde{a}_t)=\operatorname{Pr}\left(z_{i,t+1}\mid s_t,\tilde{a}_t\right)$ characterizes the generation of local observations.
The policy $\pi_i$ maps the local history $h_{i,t}=(z_{i,0},a_{i,0},\ldots,a_{i,t-1},z_{i,t})$ to an action distribution $\pi_i(a_{i,t}\mid h_{i,t})$, and the joint policy is denoted by $\tilde{\pi}=(\pi_1,\ldots,\pi_I)$.
The discount factor $\gamma\in[0,1)$ determines the relative importance of future rewards. 

To model partially observable multi-agent environments, we construct a dynamics model $D_{T,R}$ that jointly captures state transitions and shared rewards, together with an observation model $\tilde{O}=\{O_i\}_{i\in \mathcal{I}}$ for local-observation generation. The resulting world model is defined as $\mathcal{M}_{\mathrm{WM}}=(D_{T,R},\tilde{O})$. For model learning, $T$ and $R$ are jointly parameterized through a transition-reward distribution $P_{T,R}$, whereas the observation distribution are represented separately by $P_{\tilde O}$. In this work, the true global state $s_t$ is unavailable during both training and execution, and the offline dataset therefore contains only local observations, joint actions, and shared rewards, $\mathcal{D}_N=\{(\tilde{z}_t,\tilde{a}_t,r_t,\tilde{z}_{t+1})\}_{t=0}^{N-1}$, where $N$ is the number of transitions in the offline dataset.

%% file: sections/03_Reviews.tex
\section{Related Work}
Existing model-free offline RL methods mitigate distribution shift through policy constraints, conservative value estimation, and pessimistic learning \citep{kumar2020conservative,fujimoto2019off,kumar2019stabilizing,fujimoto2021minimalist,bai2022pessimistic}, with multi-agent extensions further addressing coordination through CTDE, decentralized optimization, and global-to-local regularization 
\citep{yang2021believe,pan2022plan,wang2023offline,shao2023counterfactual}. However, these methods do not explicitly model environment dynamics, limiting their ability to explicitly reason about long-horizon consequences through learned environment dynamics. Model-based offline RL provides an alternative research direction by learning environment dynamics and leveraging imagined trajectories for policy optimization. Existing methods mainly mitigate the impact of model errors on policy optimization through pessimistic modeling \citep{yu2020mopo,kidambi2020morel,wu2024ocean}, uncertainty estimation \citep{qiao2025sumo,sun2023model}, and conservative learning \citep{yu2021combo}. Most of these methods assume that the global state required for dynamics learning is directly available, including recent model-based extensions to offline MARL \citep{barde2024moma}. In partially observable multi-agent environments, however, the global state may not be directly accessible from offline data. A common workaround is to concatenate agents' local observations as a surrogate global representation \citep{lobos2022ma,barde2024moma}, while another line of work learns latent state representations from observation-action histories to capture hidden system dynamics \citep{gregor2019shaping,venugopal2023mabl,shi2025gawm}. Although these approaches enable dynamics modeling without direct access to the true global state, how to assess and refine the learned world model solely from fixed offline data remains underexplored. 

Fully offline policy assessment has been studied from several perspectives. Off-policy evaluation (OPE) estimates target-policy performance from logged data using importance weighting, value-function estimation, or learned dynamics models \citep{thomas2016dataefficient,fu2021benchmarks,uehara2026review}, while offline policy-selection methods use such estimates or alternative validation criteria to rank candidate policies and tune hyperparameters without environment interaction \citep{zhang2021towards,da2024probabilistic}. These approaches primarily assess or select policies and value functions, rather than characterizing whether a learned world model is sufficiently reliable for downstream policy learning. More recently, SOReL \citep{fellows2026fully} introduces PIL to connect posterior world model discrepancy with policy regret, enabling fully offline model assessment and tuning. However, current PIL-based approaches are mainly developed for single-agent scenarios and do not explicitly leverage PIL to update world model parameters.

In this work, we develop an MPIL-guided latent world model framework for partially observable offline MARL, which infers latent system states from local observation-action histories without access to global states. MPIL is used to assess and iteratively refine the world model fully offline, while its connection to policy regret enables principled offline evaluation of multi-agent policy performance.

%% file: sections/04_Bayesian.tex
\section{Methodology}
\label{sec:bayesian_world_model}

In this section, we present a Bayesian world model-based offline MARL framework for partially observable environments, as illustrated in Fig.~\ref{fig:mpil_comparison}. To obtain a system-level representation for world model learning and cooperative policy optimization, the proposed framework first infers a latent global state $\hat{s}_t$ from offline local observation-action histories and then learns a Bayesian world model. The world model is further optimized using an empirical multi-agent posterior information loss (MPIL) objective and assessed using an MPIL-based reliability criterion. Once the criterion is satisfied, the world model is used for cooperative policy optimization under the CTDE paradigm \citep{lowe2017multi,rashid2020monotonic}.

\begin{figure}[htbp]
    \centering
    \begin{subfigure}[t]{0.4\textwidth}
        \centering
        \includegraphics[width=\linewidth]{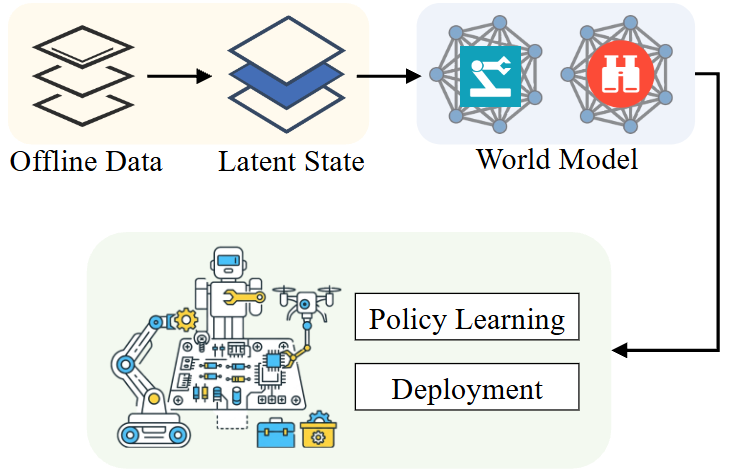}
        \caption{Traditional Model-Based Offline MARL}
        \label{fig:conventional_mborl}
    \end{subfigure}
    \hfill
    \begin{subfigure}[t]{0.58\textwidth}
        \centering
        \includegraphics[width=\linewidth]{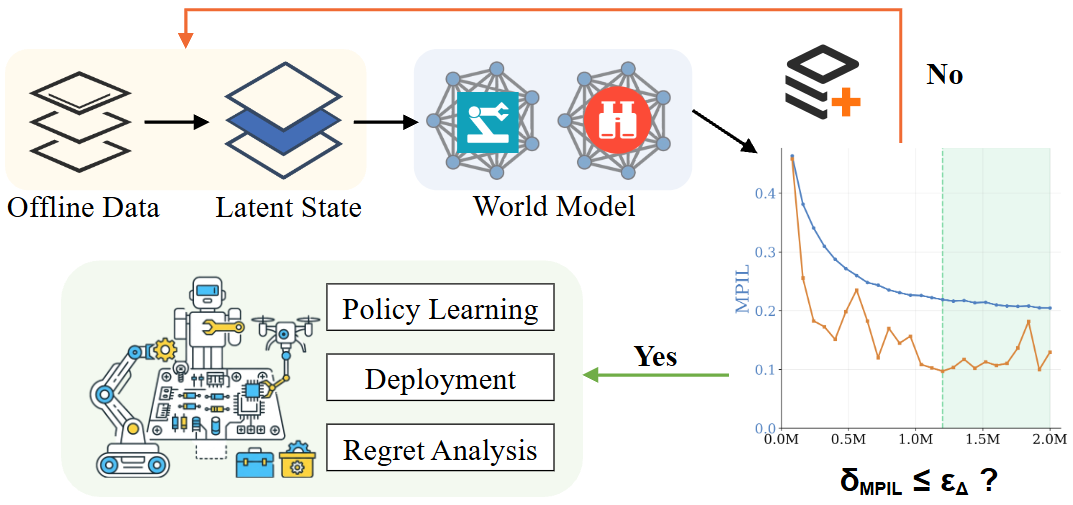}
        \caption{MPIL-Guided Model-Based Offline MARL}
        \label{fig:mpil_mborl}
    \end{subfigure}
    \caption{Comparison of Traditional and MPIL-Guided Model-Based Offline MARL Frameworks.}
    \label{fig:mpil_comparison}
\end{figure}

\subsection{Latent Global State Inference}
\label{subsec:state_inference}

Although concatenating local observations provides a straightforward approximation of global information, the resulting representation can become high-dimensional as the number of agents increases and may still omit information that is unobserved by all agents \citep{yu2022surprising}.
We therefore infer a compact history-dependent latent global state from joint observation-action histories \citep{venugopal2023mabl}.
Let $\tilde{z}_t$ and $\tilde{a}_t$ denote the joint observation and joint action of all agents at time step $t$, respectively. A gated recurrent unit (GRU) is employed as a recurrent history encoder to infer the latent state $\hat{s}_t$ \citep{igl2018deep},
\begin{align}
\hat{s}_t = \mathrm{GRU}_{\xi} \left( \hat{s}_{t-1}, [\tilde{z}_t,\tilde{a}_{t-1}] \right),
\end{align}
where $\xi$ denotes the trainable parameters of the GRU encoder. 

To regularize the inferred latent representation and encourage it to preserve observation-relevant information, we introduce an auxiliary observation decoder that reconstructs the joint observation from the latent state $\check{z}_t = D_{\psi}\left(\hat{s}_t\right)$, where $\psi$ denotes the parameters of the auxiliary decoder.
The encoder and auxiliary decoder are jointly trained by minimizing the observation reconstruction loss,
\begin{align}
\mathcal{L}_{\mathrm{aux}}(\xi,\psi) = \frac{1}{N} \sum_{t=0}^{N-1} \left\| \tilde{z}_t-\check{z}_t \right\|_2^2.
\label{eq:reconstruction_loss}
\end{align}

After representation learning, the offline data are mapped into the latent state space. The resulting state-augmented dataset is $\hat{\mathcal{D}}_N= \left\{\left(\hat{s}_t,\tilde{z}_t,\tilde{a}_t, r_t,\hat{s}_{t+1},\tilde{z}_{t+1}\right)\right\}_{t=0}^{N-1}$.

\subsection{World Model Formulation}
\label{subsec:bayesian_wm_formulation}

In this subsection, we adopt a Bayesian formulation for the transition-reward parameters $\theta$ and observation parameters $\tilde{\omega}$ \citep{folch2024transition,duff2002optimal}.

Based on the inferred latent global state, the parameterized transition-reward model is defined as $P_{T,R}=p(\hat{s}_{t+1},r_t\mid\hat{s}_t,\tilde{a}_t,\theta)$, where $\theta\in\Theta\subseteq\mathbb{R}^{d}$. For each agent $i$, the observation model is defined as $P_{O_i}=p(z_{i,t+1}\mid\hat{s}_t,\tilde{a}_t,\omega_i)$, where $\omega_i\in\Omega_i\subseteq\mathbb{R}^{d}$. We assume conditional independence of local observations given $(\hat{s}_t,\tilde{a}_t)$, yielding $P_{\tilde{O}}=p(\tilde{z}_{t+1}\mid\hat{s}_t,\tilde{a}_t,\tilde{\omega})=\prod_{i=1}^{I}p(z_{i,t+1}\mid\hat{s}_t,\tilde{a}_t,\omega_i)$. We collect $\tilde{\omega}=(\omega_1,\ldots,\omega_I)\in\tilde{\Omega}$, where $\tilde{\Omega}=\Omega_1\times\cdots\times\Omega_I$, and use the factorized prior $p(\tilde{\omega})=\prod_{i=1}^{I}p(\omega_i)$. Prior distributions $P_{\Theta}$ and $P_{\Omega_i}$ encode the initial beliefs over the transition-reward and observation models, respectively. Given $\hat{\mathcal{D}}_N$, Bayes' rule \citep{ross2011bayesian} yields the joint posterior over the complete world model parameters,
\begin{align}
p(\theta,\tilde{\omega}\mid\hat{\mathcal{D}}_N) \propto p(\theta) \prod_{i=1}^{I} p(\omega_i) \prod_{t=0}^{N-1} \left[ p\left( \hat{s}_{t+1},r_t \mid \hat{s}_t,\tilde{a}_t,\theta \right) \prod_{i=1}^{I} p\left( z_{i,t+1} \mid \hat{s}_t,\tilde{a}_t,\omega_i \right) \right].
\end{align}

Together, these components define $P_{\Theta,\tilde{\Omega}}(\cdot\mid\hat{\mathcal{D}}_N)$, which characterizes posterior uncertainty over the complete world model. We train the world model using the corresponding posterior-averaged negative log-likelihood (NLL), defined as
\begin{align}
\label{ideal_NLL}
\mathcal{L}_{\mathrm{NLL}}
=
\mathbb{E}_{(\hat{s},\tilde{a})\sim\nu_{\hat{\mathcal{D}}_N}}
\mathbb{E}_{(\theta,\tilde{\omega})\sim
P_{\Theta,\tilde{\Omega}}(\cdot\mid\hat{\mathcal{D}}_N)}
\Big[
&
\mathbb{E}_{(s',r)\sim P_{T,R}^{\star}(\cdot\mid\hat{s},\tilde{a})}
\left[
-\log P_{T,R}(s',r\mid\hat{s},\tilde{a},\theta)
\right]
\nonumber\\
&+
\mathbb{E}_{\tilde{o}'\sim
P_{\tilde{O}}^{\star}(\cdot\mid\hat{s},\tilde{a})}
\left[
-\log P_{\tilde{O}}(\tilde{o}'\mid\hat{s},\tilde{a},\tilde{\omega})
\right]
\Big].
\end{align}
This objective evaluates how well posterior world model hypotheses assign likelihood to transitions generated by the true environment, and serves as the world model learning objective.

To evaluate world model reliability before policy learning, we extend posterior information loss (PIL) \citep{fellows2026fully} to the multi-agent setting and further reformulate it as a dataset-weighted criterion. Specifically, we define multi-agent posterior information loss (MPIL) as the posterior-averaged discrepancy between the true and learned world model distributions over state-action pairs covered by the offline dataset:
\begin{align}
\label{PIL}
\mathcal{I}_{\hat{\mathcal{D}}_N}
=
\mathbb{E}_{(\theta,\tilde{\omega})\sim
P_{\Theta,\tilde{\Omega}}(\cdot\mid\hat{\mathcal{D}}_N)}
\mathbb{E}_{(\hat{s}_j,\tilde{a}_j)\sim
\nu_{\hat{\mathcal{D}}_N}}
\Big[
&D_{\mathrm{KL}}\left(
P_{T,R}^{\star}(\cdot\mid\hat{s}_j,\tilde{a}_j)
\|
P_{T,R}(\cdot\mid\hat{s}_j,\tilde{a}_j,\theta)
\right)
\nonumber\\
&+
D_{\mathrm{KL}}\left(
P_{\tilde{O}}^{\star}(\cdot\mid\hat{s}_j,\tilde{a}_j)
\|
P_{\tilde{O}}(\cdot\mid\hat{s}_j,\tilde{a}_j,\tilde{\omega})
\right)
\Big].
\end{align}
Here, $P_{T,R}^{\star}$ and $P_{\tilde{O}}^{\star}$ denote the transition--reward and joint-observation distributions of the true environment, respectively.
$\nu_{\hat{\mathcal{D}}_N}$ denotes the underlying state-action distribution associated with the offline dataset $\hat{\mathcal{D}}_N$; in practice, expectations with respect to $\nu_{\hat{\mathcal{D}}_N}$ are approximated by uniformly sampling transitions from the dataset.
The following proposition establishes the theoretical connection between MPIL and the NLL objective.

\begin{prp}
\label{proposition_nll_mpil}
For a fixed offline data distribution $\nu_{\hat{\mathcal{D}}_N}$,
\begin{align}
\label{MPIL_NLL}
\mathcal{I}_{\hat{\mathcal{D}}_N}
=
\mathcal{L}_{\mathrm{NLL}}
-
\mathcal{H}^{\star}_{\hat{\mathcal{D}}_N},
\end{align}
where $\mathcal{H}^{\star}_{\hat{\mathcal{D}}_N} = \mathbb{E}_{(\hat{s},\tilde{a})\sim\nu_{\hat{\mathcal{D}}_N}} \Big[
H\!\left( P_{T,R}^{\star}(\cdot\mid\hat{s},\tilde{a}) \right)
+ H\!\left( P_{\tilde{O}}^{\star}(\cdot\mid\hat{s},\tilde{a}) \right) \Big]$  denotes the conditional entropy of the true environment under the offline data distribution.
\end{prp}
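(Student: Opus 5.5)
The plan is to apply, pointwise at each fixed $(\hat{s},\tilde{a})$ and each fixed parameter pair $(\theta,\tilde{\omega})$, the standard decomposition of a KL divergence into cross-entropy minus entropy. Afterwards we integrate against $\nu_{\hat{\mathcal{D}}_N}$ and the posterior $P_{\Theta,\tilde{\Omega}}(\cdot\mid\hat{\mathcal{D}}_N)$, and use linearity of expectation.

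Concretely, for the transition--reward term, write
\begin{align*}
D_{\mathrm{KL}}\big(P_{T,R}^{\star}(\cdot\mid\hat{s},\tilde{a})\,\|\,P_{T,R}(\cdot\mid\hat{s},\tilde{a},\theta)\big)
=\mathbb{E}_{(s',r)\sim P_{T,R}^{\star}(\cdot\mid\hat{s},\tilde{a})}\big[-\log P_{T,R}(s',r\mid\hat{s},\tilde{a},\theta)\big]-H\big(P_{T,R}^{\star}(\cdot\mid\hat{s},\tilde{a})\big).
\end{align*}
This follows directly from $\log(p^\star/p)=-\log p-(-\log p^\star)$, integrated under $p^\star$. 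The observation term is handled identically, with $P_{\tilde{O}}^{\star}$ and $P_{\tilde{O}}(\cdot\mid\hat{s},\tilde{a},\tilde{\omega})$ in place of the transition--reward distributions. The product factorization over agents is not needed here, since we treat the joint observation distribution as a single distribution.

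Next, sum the two identities and take $\mathbb{E}_{(\theta,\tilde{\omega})}\mathbb{E}_{(\hat{s},\tilde{a})}$ of both sides. The left-hand side becomes $\mathcal{I}_{\hat{\mathcal{D}}_N}$. On the right-hand side:
\begin{itemize}
\item We exchange the order of the two outer expectations via Fubini/Tonelli. This is justified because $\nu_{\hat{\mathcal{D}}_N}$ is fixed and does not depend on $(\theta,\tilde{\omega})$, which is exactly where the hypothesis of a fixed data distribution enters.
\item The cross-entropy terms then give precisely $\mathcal{L}_{\mathrm{NLL}}$ as written in \eqref{ideal_NLL}.
\item The entropy terms do not depend on $(\theta,\tilde{\omega})$, so the posterior expectation acts trivially on them. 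They reduce to $\mathcal{H}^{\star}_{\hat{\mathcal{D}}_N}$.
\end{itemize}

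The main obstacle is purely measure-theoretic well-definedness. The subtraction $\mathcal{L}_{\mathrm{NLL}}-\mathcal{H}^{\star}$ requires that the entropies be finite; for continuous $(s',r,\tilde{o}')$ these are differential entropies and may be negative. We also need the cross-entropies to be integrable, so that splitting the expectation of a sum into a sum of expectations is legitimate and no $\infty-\infty$ arises. I would handle this by stating, as an implicit standing assumption, that $\mathcal{H}^{\star}_{\hat{\mathcal{D}}_N}$ is finite and that $P^\star\ll P$ almost surely under the posterior. If instead $\mathcal{L}_{\mathrm{NLL}}=\infty$, then both sides are $+\infty$ and the identity holds trivially. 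A minor notational point is that the same dummy variable is used for the next state ($s'$ versus $\hat{s}_{t+1}$) in both definitions; it should be noted that the integrands agree.
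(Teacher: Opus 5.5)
Your proposal is correct and follows the paper's proof: apply the pointwise identity KL $=$ cross-entropy $-$ entropy separately to the transition--reward and joint-observation factors, then average over $\nu_{\hat{\mathcal{D}}_N}$ and the posterior, where the parameter-free entropy terms collapse to $\mathcal{H}^{\star}_{\hat{\mathcal{D}}_N}$. Your explicit Tonelli/Fubini step and integrability caveats tighten the argument, because the definitions of $\mathcal{L}_{\mathrm{NLL}}$ and $\mathcal{I}_{\hat{\mathcal{D}}_N}$ list the two expectations in opposite orders, and the paper covers both the swap and well-definedness only with a brief ``whenever the displayed quantities are well defined'' remark.
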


The proof, based on the KL--cross-entropy identity, is provided in Appendix~\ref{appendix:proof_prp1}. This relation provides a tractable way to estimate MPIL through the posterior-averaged NLL. Since $\mathcal{H}^{\star}_{\hat{\mathcal{D}}_N}$ is independent of the world model parameters, minimizing the posterior-averaged NLL is equivalent to minimizing MPIL, and NLL changes across refinement stages directly track changes in MPIL. 

Exact posterior inference is generally intractable. We therefore employ an ensemble of $K$ world models,
$\{\mathcal{M}_k\}_{k=1}^{K}$, where
$\mathcal{M}_k=(T_{\theta_k},\tilde{O}_{\tilde{\omega}_k})$,
to approximate the joint parameter posterior as
\begin{align}
P_{\Theta,\tilde{\Omega}}
\left(\cdot \mid \hat{\mathcal{D}}_N\right)
\approx
\widehat{P}_{\Phi}(\cdot)
:=
\frac{1}{K}
\sum_{k=1}^{K}
\delta_{(\theta_k,\tilde{\omega}_k)}(\cdot),
\label{eq:ensemble_posterior}
\end{align}
where $\Phi=\{(\theta_k,\tilde{\omega}_k)\}_{k=1}^{K}$ denotes the collection of ensemble parameters. 
Each parameter pair represents a candidate world model hypothesis, and the ensemble members are independently initialized to encourage model diversity.

Under this ensemble approximation, the posterior expectation in Eq.~(\ref{ideal_NLL}) is replaced by an empirical average over the $K$ world models. 
For an offline subset $\hat{\mathcal{D}}$, the empirical NLL is computed as
\begin{align}
\label{empirical_NLL}
\widehat{\mathcal{L}}_{\mathrm{NLL}}
(\Phi;\hat{\mathcal{D}})
=
-\frac{1}{K|\hat{\mathcal{D}}|}
\sum_{k=1}^{K}
\sum_{\tau \in\hat{\mathcal{D}}}
\Big[
\log P_{T,R}
(s'_i,r_i\mid\hat{s}_i,\tilde{a}_i,\theta_k)
+
\log P_{\tilde{O}}
(\tilde{o}'_i\mid\hat{s}_i,\tilde{a}_i,\tilde{\omega}_k)
\Big],
\end{align}
where $\tau=(\hat{s}_t,\tilde{a}_t,\hat{s}_{t+1},r_t,\tilde{o}_{t+1})$ denotes an offline transition in $\hat{\mathcal{D}}$.
We use $\widehat{\mathcal{L}}_{\mathrm{NLL}}$ as the practical world model training loss and as an empirical surrogate for monitoring MPIL during iterative world model refinement.

\subsection{Learning an Approximately Bayes-Optimal Policy}
\label{subsec:4.3}
We denote the parameters of the initially learned world model ensemble by $\Phi^{(1)}$. We assess world model convergence using the change in MPIL between consecutive refinement stages. A fixed held-out validation set $\hat{\mathcal{D}}_{\mathrm{val}}$ is used throughout refinement, and we define
\begin{align}
\label{delta_MPIL}
\delta_{\mathrm{MPIL}}^{(m+1)}
:=
\left|
\mathcal{I}_{\hat{\mathcal{D}}_{\mathrm{val}}}
\left(\Phi^{(m+1)}\right)
-
\mathcal{I}_{\hat{\mathcal{D}}_{\mathrm{val}}}
\left(\Phi^{(m)}\right)
\right|.
\end{align}
By Proposition~\ref{proposition_nll_mpil}, the entropy term remains constant under the same validation distribution, so $\delta_{\mathrm{MPIL}}^{(m+1)}$ equals the corresponding change in posterior-averaged NLL and can be computed entirely offline using the empirical estimator in Eq.~(\ref{empirical_NLL}).

Let $\{\mathcal{D}_m\}_{m=1}^{R}$ denote the raw data subsets incorporated during refinement, and let $\hat{\mathcal{D}}^{(m)}$ denote the cumulative latent-state-augmented training data constructed from the first $m$ subsets. We use $\Phi^{(m)}$ to denote the ensemble parameters at refinement stage $m$, with the initially learned ensemble indexed by $m=1$. 
At stage $m$, the next subset $\mathcal{D}_{m+1}$ is incorporated to construct $\hat{\mathcal{D}}^{(m+1)}$, and the ensemble is warm-started from $\Phi^{(m)}$ and refined by minimizing the empirical NLL objective
\begin{equation}
\Phi \leftarrow \Phi
-\eta\nabla_{\Phi}
\hat{\mathcal{L}}_{\mathrm{NLL}}
\left(\Phi;\hat{\mathcal{D}}_{\mathrm{train}}^{(m+1)}\right),
\qquad
\left.\Phi\right|_{\mathrm{start}}=\Phi^{(m)},
\label{eq}
\end{equation}
where $\eta$ denotes the learning rate. Repeating this update yields $\Phi^{(m+1)}$, which induces the refined empirical posterior $P_{\Theta,\tilde{\Omega}}^{(m+1)}$.
After each refinement stage, both $\Phi^{(m)}$ and $\Phi^{(m+1)}$ are evaluated on the validation set to compute $\delta_{\mathrm{MPIL}}^{(m+1)}$. 
The refinement procedure terminates when $
\delta_{\mathrm{MPIL}}^{(m+1)}
\leq
\epsilon_{\Delta},
$ where $\epsilon_{\Delta}$ is a convergence tolerance, or when all $R$ data subsets have been incorporated. 
The terminating refinement stage is denoted by $m^{\star}$, and its empirical posterior $P_{\Theta,\tilde{\Omega}}^{(m^{\star})}$ is subsequently used as the surrogate environment distribution for policy learning.

To generate each imagined trajectory, we sample
$(\theta,\tilde{\omega})\sim P_{\Theta,\tilde{\Omega}}^{(m^{\star})}$
and roll out the joint policy $\tilde{\pi}$ in the corresponding world model.
Let $P_{\infty,\tilde{\pi}}(\theta,\tilde{\omega})$ denote the induced imagined-trajectory distribution, and let $\check{h}_{\infty}$ denote the corresponding interaction history. The Bayesian policy objective is defined as
\begin{align}
\label{Bayesian_policy_objective}
J_{\mathrm{Bayes}}^{\tilde{\pi}}
\left(
P_{\Theta,\tilde{\Omega}}^{(m^{\star})}
\right)
=
\mathbb{E}_{\substack{
(\theta,\tilde{\omega})\sim P_{\Theta,\tilde{\Omega}}^{(m^{\star})},
\check{h}_{\infty}\sim
P_{\infty,\tilde{\pi}}(\theta,\tilde{\omega})
}}
\left[
\sum_{t=0}^{\infty}
\gamma^t\check{r}_t
\right].
\end{align}
We optimize this objective using SAC \citep{haarnoja2018soft} under CTDE \citep{lowe2017multi}, where the centralized critic conditions on inferred latent states and joint actions, while each decentralized actor uses only its local observation history. The resulting policy is then evaluated fully offline using the regret. The complete procedure is summarized in Algorithm~\ref{alg:pil_marl} in the Appendix.

%% file: sections/05_Regret.tex
\section{Regret Analysis}
This section relates MPIL to policy performance by bounding the Bayes-optimal policy regret as a monotone function of MPIL and characterizing its dependence on offline data size and coverage.
\subsection{Controlling Regret with the MPIL}
For a finite $N$, the true regret measures the difference in expected return between the Bayes-optimal policy $\tilde{\pi}_{\operatorname{Bayes}}^{\star}$ learned from the world model and the optimal policy $\tilde{\pi}^{\star}$ in the true environment $\mathcal{M}^{\star}$,
\begin{equation}
\label{Regret}
    \operatorname{Regret}\left(\mathcal{M}^{\star}, \hat{\mathcal{D}}_{N}\right):=J^{\tilde{\pi}^{\star}}\left(\mathcal{M}^{\star}\right)-J^{\tilde{\pi}_{\operatorname {Bayes }}^{\star}}\left(\mathcal{M}^{\star}\right).
\end{equation}
Since the $\mathcal{M}^{\star}$ is unknown, this regret cannot be directly evaluated. Leveraging the theoretical properties of MPIL, the following theorem further establishes its connection to the true regret.

\begin{thm}
\label{upper-bound}
Under Assumption~\ref{asm:latent_structure} and bounded coverage, the regret satisfies
\begin{align}
\label{regret_sup_pi}
\operatorname{Regret}\left(\mathcal{M}^{\star}, \hat{\mathcal{D}}_{N}\right)
\leq 2 \mathcal{R}_{\max } \cdot  \sqrt{1-\operatorname{exp}\left(-\frac{C\mathcal{I}_{\hat{\mathcal{D}}_N}}{1-\gamma}\right)}.
\end{align}
Here, $C<\infty$ denotes a concentrability coefficient that bounds the distribution shift between the policy-induced state--action distribution and the offline data distribution, i.e., the relevant occupancy-density ratio is uniformly bounded by $C$. This assumption is closely related to the bounded importance-weight condition used in off-policy evaluation \citep{thomas2016dataefficient}.

\end{thm}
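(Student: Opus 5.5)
The plan follows the SOReL argument for PIL, adapted to the latent multi-agent model. We first split the regret into two evaluation errors. Because $\tilde{\pi}_{\mathrm{Bayes}}^{\star}$ maximizes $J_{\mathrm{Bayes}}$ (Eq.~(\ref{Bayesian_policy_objective})), we have $J_{\mathrm{Bayes}}^{\tilde{\pi}^{\star}}\le J_{\mathrm{Bayes}}^{\tilde{\pi}^{\star}_{\mathrm{Bayes}}}$. This gives
\begin{align*}
\operatorname{Regret}\le \big[J^{\tilde{\pi}^{\star}}(\mathcal{M}^{\star})-J_{\mathrm{Bayes}}^{\tilde{\pi}^{\star}}\big]+\big[J_{\mathrm{Bayes}}^{\tilde{\pi}^{\star}_{\mathrm{Bayes}}}-J^{\tilde{\pi}^{\star}_{\mathrm{Bayes}}}(\mathcal{M}^{\star})\big]\le 2\sup_{\tilde{\pi}}\big|J^{\tilde{\pi}}(\mathcal{M}^{\star})-J_{\mathrm{Bayes}}^{\tilde{\pi}}\big|.
\end{align*}
This accounts for the factor $2$. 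It therefore suffices to bound, uniformly over joint policies, the gap between true and posterior-averaged returns.

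For a fixed $\tilde{\pi}$, we write both returns as expectations of $\sum_t\gamma^t r_t$ under trajectory laws. The true law is $P^{\star}_{\infty,\tilde{\pi}}$. The Bayesian law is the mixture $\bar{P}_{\infty,\tilde{\pi}}=\mathbb{E}_{(\theta,\tilde{\omega})}P_{\infty,\tilde{\pi}}(\theta,\tilde{\omega})$. Assumption~\ref{asm:latent_structure} lets us identify the true process with one over latent states $\hat{s}_t$, so both laws live on the same history space.

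With rewards normalized so that the discounted return lies in a range of width $\mathcal{R}_{\max}$, the gap is at most $\mathcal{R}_{\max}\,\mathrm{TV}(P^{\star}_{\infty,\tilde{\pi}},\bar{P}_{\infty,\tilde{\pi}})$. Applying the Bretagnolle--Huber inequality $\mathrm{TV}\le\sqrt{1-e^{-\mathrm{KL}}}$ then yields the $\sqrt{1-\exp(-\cdot)}$ form. We use this inequality rather than Pinsker because it stays bounded by $1$.

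It remains to bound $\mathrm{KL}(P^{\star}_{\infty,\tilde{\pi}}\,\|\,\bar{P}_{\infty,\tilde{\pi}})$ by $C\mathcal{I}_{\hat{\mathcal{D}}_N}/(1-\gamma)$. First, joint convexity of KL gives $\mathrm{KL}(P^\star\|\bar P)\le\mathbb{E}_{(\theta,\tilde{\omega})}\mathrm{KL}(P^\star_{\infty,\tilde{\pi}}\|P_{\infty,\tilde{\pi}}(\theta,\tilde{\omega}))$. Next, the KL chain rule applies along the trajectory. The policy factors $\pi_i(a_{i,t}\mid h_{i,t})$ are identical under both laws and cancel. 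The per-step terms then reduce to the transition--reward KL plus the joint-observation KL, the latter splitting by the conditional independence of $P_{\tilde O}$. Each is evaluated at $(\hat{s}_t,\tilde{a}_t)$ drawn from the true visitation of $\tilde{\pi}$.

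Summing over time gives the policy's (unnormalized) occupancy $\sum_t d_t^{\tilde{\pi}}=\frac{1}{1-\gamma}d^{\tilde{\pi}}_\gamma$. Finally, the concentrability assumption $d^{\tilde{\pi}}_\gamma/\nu_{\hat{\mathcal{D}}_N}\le C$ converts the expectation under $d^{\tilde{\pi}}_\gamma$ into $C$ times the expectation under $\nu_{\hat{\mathcal{D}}_N}$. After the posterior average this is exactly $C\mathcal{I}_{\hat{\mathcal{D}}_N}/(1-\gamma)$, and monotonicity of $x\mapsto\sqrt{1-e^{-x}}$ finishes the proof.

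The main obstacle is the chain-rule step over an infinite horizon.
\begin{itemize}
\item \emph{Discounting.} A literal chain rule sums undiscounted KL terms, which diverge. I would first prove a discounted version instead. The plan is to bound the return gap by $\sum_t\gamma^t|\mathbb{E}^\star r_t-\bar{\mathbb{E}} r_t|$ and to use the time-$t$ marginal TV, which is controlled by the KL of the length-$t$ prefix. An alternative is to interpret the trajectory as terminating with probability $1-\gamma$ per step, so that the KL of the stopped process naturally carries the $1/(1-\gamma)$ factor. I expect this second route to match the stated form exactly.
\item \emph{Decentralized policies.} Since each $\pi_i$ depends on local histories, the policy conditionals are measurable with respect to the trajectory in both models and still cancel.
\end{itemize}
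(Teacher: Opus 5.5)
Your outline follows the paper's proof: the two-policy decomposition via $J_{\mathrm{Bayes}}^{\tilde\pi^\star}\le J_{\mathrm{Bayes}}^{\tilde\pi^\star_{\mathrm{Bayes}}}$, a TV bound on each return gap, Bretagnolle--Huber, convexity of KL in its second argument to move the posterior average outside, the KL chain rule with cancelling policy factors, and the change of measure with $C$. The gap is at the step you flag, and your main line as written does not prove the theorem. Bounding the gap by $\mathcal R_{\max}D_{\mathrm{TV}}(P^\star_{\infty,\tilde\pi},\bar P_{\infty,\tilde\pi})$ is valid. However, the chain rule on infinite trajectories gives the undiscounted sum $\sum_{t\ge 0}a_t$, where $a_t:=\mathbb E_{x_t\sim P_t^{\tilde\pi}(\mathcal M^\star)}[d_{\hat{\mathcal D}_N}(x_t)]$. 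The identity $\sum_t d_t^{\tilde\pi}=\frac{1}{1-\gamma}d_\gamma^{\tilde\pi}$ you invoke is false; it needs weights $\gamma^t$ on the left. For example, take a point-mass posterior with constant per-step discrepancy $c>0$. Then the full-trajectory KL is $+\infty$, so with $g(x)=\sqrt{1-e^{-x}}$ this route returns only the trivial bound $\mathcal R_{\max}$. The per-policy bound you need is $\mathcal R_{\max}\,g(Cc/(1-\gamma))<\mathcal R_{\max}$.

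Your fix (a) is what the paper does, but it lacks the ingredient that produces the closed form. Start from $|J^{\tilde\pi}(\mathcal M^\star)-J_{\mathrm{Bayes}}^{\tilde\pi}|\le(r_{\max}-r_{\min})\sum_t\gamma^t D_{\mathrm{TV}}(P_{t+1,\tilde\pi}(\mathcal M^\star),\widehat P_{t+1,\tilde\pi})=\mathcal R_{\max}\mathbb E_{t\sim\mathcal G(\gamma)}[D_{\mathrm{TV}}(P_{t+1,\tilde\pi}(\mathcal M^\star),\widehat P_{t+1,\tilde\pi})]$. Applying Bretagnolle--Huber and convexity separately for each $t$ only gives $\mathcal R_{\max}\mathbb E_{t\sim\mathcal G(\gamma)}[g(\sum_{j\le t}a_j)]$. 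You then need that $g$ is concave, not just increasing, and Jensen over $t\sim\mathcal G(\gamma)$ to move the expectation inside $g$. The geometric averaging of prefix sums is what creates the discount: $\mathbb E_{t\sim\mathcal G(\gamma)}\sum_{j\le t}a_j=\sum_j\gamma^j a_j=\frac{1}{1-\gamma}\mathbb E_{x\sim\rho^{\tilde\pi}}[d_{\hat{\mathcal D}_N}(x)]$, where $\rho^{\tilde\pi}$ is the normalized discounted occupancy to which coverage is then applied.

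Your route (b) can also be made to work, but only by writing $J^{\tilde\pi}=\frac{1}{1-\gamma}\mathbb E_{T\sim\mathcal G(\gamma)}[r_T]$ and applying Bretagnolle--Huber to the joint law of $T$ and the prefix up to $T$; this is the same Jensen step in disguise. With the cumulative return $\sum_{t\le T}r_t$ of the stopped process the range is unbounded, so the bound by $\mathcal R_{\max}$ times a TV distance fails.

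Two smaller corrections:
\begin{itemize}
\item Do not pass to $2\sup_{\tilde\pi}$. That needs concentrability for every joint policy, while the coverage assumption only covers $\tilde\pi^\star$ and $\tilde\pi^\star_{\mathrm{Bayes}}$; apply the per-policy bound to those two, as the paper does.
\item State that the true and model processes share the initial law $P_0$; otherwise an extra KL term enters the chain rule.
\end{itemize}
Finally, splitting the observation KL across agents is unnecessary, since MPIL uses the joint-observation KL, and it would additionally require $P^\star_{\tilde O}$ to factorize.
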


The proof is provided in Appendix~\ref{appendix:proof_thm1}.
Theorem~\ref{upper-bound} shows that MPIL upper-bounds the regret of the policy learned from the world model, with a smaller MPIL yielding a tighter performance guarantee. This provides support for using MPIL in offline world model refinement and selection.

\subsection{Expected Regret under MPIL Contraction}

Theorem~\ref{upper-bound} relates MPIL to policy regret. We further characterize its dependence on offline dataset size under an explicit MPIL-contraction assumption and a standard coverage condition \citep{ghosal2000convergence,wei2003predictive, uehara2022pessimistic,zhan2022offline}. Throughout this subsection, expectations are taken over the randomness of the offline dataset and any algorithmic randomness, when applicable.

\begin{thm}
\label{thm2}
Assume that the expected MPIL contracts at the parametric rate $\mathbb{E}_{\hat{\mathcal D}_N}[\mathcal I_{\hat{\mathcal D}_N}] \leq K/N$, where $K>0$ is independent of $N$. Assume further that the concentrability coefficient $C=C(\hat{\mathcal D}_N)$ in Theorem~\ref{upper-bound} is almost surely bounded, uniformly in $N$, by a deterministic constant $C_{\mathrm{cov}}<\infty$.
Then, the expected policy regret satisfies
\begin{equation}
\label{eq:expected_regret_rate}
\mathbb{E}_{\hat{\mathcal D}_N}
\left[
\operatorname{Regret}
\left(\mathcal M^{\star},\hat{\mathcal D}_N\right)
\right]
\leq
2\mathcal R_{\max}
\sqrt{
\frac{C_{\mathrm{cov}}K}
{(1-\gamma)N}
}.
\end{equation}
Posterior contraction together with bounded coverage yields an expected regret rate of $O(N^{-1/2})$.
\end{thm}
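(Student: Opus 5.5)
The plan is to start from the pathwise bound of Theorem~\ref{upper-bound}, which holds for every realization of $\hat{\mathcal D}_N$. Replace the random concentrability coefficient $C(\hat{\mathcal D}_N)$ by its deterministic almost-sure bound $C_{\mathrm{cov}}$. Since $x\mapsto 1-\exp(-x)$ is increasing and $\mathcal I_{\hat{\mathcal D}_N}\ge 0$ (it is an average of KL divergences), this gives, almost surely,
\begin{align*}
\operatorname{Regret}(\mathcal M^\star,\hat{\mathcal D}_N)
\le 2\mathcal R_{\max}\sqrt{1-\exp\!\Big(-\tfrac{C_{\mathrm{cov}}\mathcal I_{\hat{\mathcal D}_N}}{1-\gamma}\Big)}.
\end{align*}

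Next, linearize with the elementary inequality $1-e^{-x}\le x$ for $x\ge 0$. This turns the right-hand side into
\begin{align*}
2\mathcal R_{\max}\sqrt{\frac{C_{\mathrm{cov}}\,\mathcal I_{\hat{\mathcal D}_N}}{1-\gamma}},
\end{align*}
which is a concave function of the nonnegative random variable $\mathcal I_{\hat{\mathcal D}_N}$.

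Then take expectations over $\hat{\mathcal D}_N$ and any algorithmic randomness. Jensen's inequality for the concave map $\sqrt{\cdot}$ gives $\mathbb E[\sqrt{X}]\le\sqrt{\mathbb E[X]}$. Applying this and then the contraction assumption $\mathbb E[\mathcal I_{\hat{\mathcal D}_N}]\le K/N$ yields exactly the bound \eqref{eq:expected_regret_rate}. The $O(N^{-1/2})$ rate is then immediate, since every other factor is independent of $N$.

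A cleaner alternative keeps the sharper form: $x\mapsto\sqrt{1-e^{-cx}}$ is itself concave and increasing on $[0,\infty)$. One could therefore apply Jensen first and bound the result by $\sqrt{1-\exp(-C_{\mathrm{cov}}K/((1-\gamma)N))}$, then linearize. Either route yields the stated inequality.

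The only step needing care is measurability and integrability. The regret is bounded by $2\mathcal R_{\max}/(1-\gamma)$, so its expectation exists. The coefficient $C$ must be bounded \emph{almost surely}, not merely in expectation, so that it can be pulled out before Jensen is applied. This is precisely why the theorem assumes the deterministic bound $C_{\mathrm{cov}}$; otherwise one would face a correlated product $C\cdot\mathcal I$ inside the square root. I expect no real obstacle beyond checking that Theorem~\ref{upper-bound} applies for each realization under Assumption~\ref{asm:latent_structure}.
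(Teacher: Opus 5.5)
Your proposal is correct and matches the paper's proof step for step: replace $C$ by $C_{\mathrm{cov}}$ using monotonicity, linearize with $1-e^{-x}\le x$, apply Jensen to the concave square root, and substitute $\mathbb{E}[\mathcal I_{\hat{\mathcal D}_N}]\le K/N$. Your alternative is also valid. Applying Jensen directly to the concave, increasing map $x\mapsto\sqrt{1-e^{-cx}}$ before linearizing gives a slightly sharper intermediate bound and the same final rate.
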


The proof is provided in Appendix~\ref{appendix:proof_thm2}. Theorem~\ref{thm2} separates two sources that determine the finite-data guarantee. The contraction assumption controls how quickly the posterior world model discrepancy decreases as more offline data become available, whereas $C_{\mathrm{cov}}$ captures the distribution mismatch between the offline data and the state-joint-action regions visited by $\tilde{\pi}^{\star}$ and $\tilde{\pi}_{\mathrm{Bayes}}^{\star}$. Therefore, increasing the dataset tightens the regret guarantee only when these policy-relevant regions remain sufficiently covered.

\subsection{Empirical Regret Approximation}
\label{sec:empirical_regret}
The regret in Eq.~(\ref{Regret}) cannot be evaluated directly because the true environment and its optimal policy are unknown. We therefore introduce an empirical approximation based on the return predicted by the learned posterior world model. Given a policy $\tilde{\pi}_{\operatorname{Bayes}}^{\star}$, its model-based return is estimated through finite-horizon rollouts as $\widehat{J}^{\tilde{\pi}_{\operatorname{Bayes}}^{\star}} (\hat{\mathcal{M}}_N) = \frac{1}{L} \sum_{\ell=1}^{L} \sum_{t=0}^{H-1} \hat{r}_{t}^{(\ell)}$,
where $\hat{\mathcal{M}}_N$ denotes the posterior world model learned from $\hat{\mathcal{D}}_N$, and $L$ is the number of model rollouts.
Let $J_{\mathrm{ref}}^{+}$ and $J_{\mathrm{ref}}^{-}$ denote fixed upper and lower reference returns, respectively. We define the empirical regret approximation as
\begin{equation}
    \widehat{\operatorname{Regret}} \left(\hat{\mathcal{M}}_N,\hat{\mathcal{D}}_N\right) := \frac{ J_{\mathrm{ref}}^{+} - \widehat{J}^{\tilde{\pi}_{\operatorname{Bayes}}^{\star}} (\hat{\mathcal{M}}_N) }{ J_{\mathrm{ref}}^{+}-J_{\mathrm{ref}}^{-} }.
    \label{eq:empirical_regret}
\end{equation}
This normalization provides a comparable measure of the predicted performance gap, with a smaller value indicating better policy performance. Although it does not replace the theoretical regret in Eq.~(\ref{Regret}), it provides an approximate method for evaluating policy quality entirely offline.

%% file: sections/07_Experiments.tex
\section{Experiments}

We adopt three classic MuJoCo continuous-control tasks from OpenAI Gym, including Halfcheetah, Hopper, and Walker2d, as the experimental environments. All experiments are conducted using the D4RL benchmark~\citep{fu2020d4rl} for offline RL. For each task, four types of offline datasets are considered, including random, medium, full-replay, and expert, resulting in a total of 12 task-dataset combinations. To investigate offline multi-agent cooperation under partial observability, we extend each original single-agent dataset to a multi-agent setting. 
Each agent has access only to its local observations, while the joint actions of all agents collectively determine the environment dynamics. 
To ensure a consistent information setting across different methods, the global state is excluded from the offline datasets and remains unavailable to all methods throughout training.

\subsection{Offline Policy Performance}
Policy performance is evaluated in terms of normalized average return and compared with behavior cloning (BC), OMAR~\citep{pan2022plan}, and MOMA-PPO~\citep{barde2024moma}. Once the world model satisfies the MPIL criterion, we fix the model posterior and generate imagined trajectories for cooperative multi-agent policy optimization under the CTDE paradigm. Table~\ref{tab:normalized_performance_comparison} reports the normalized returns across the three tasks and four dataset types, using the standard D4RL reference scores. Results are reported as mean $\pm$ standard deviation over 5 random seeds, with each final policy evaluated over 10 episodes. Implementation details and hyperparameter settings for all methods are provided in Appendix~\ref{app:details}.

\begin{table}[htbp]
\centering
\setlength{\tabcolsep}{1.5pt}
\caption{Normalized performance comparison on the multi-agent variants of the D4RL datasets.}
\label{tab:normalized_performance_comparison}
\small
\begin{tabular*}{\linewidth}{@{\extracolsep{\fill}}llcccc}
\toprule
Task & Dataset &
\makecell{BC} &
\makecell{OMAR} &
\makecell{MOMA-PPO} &
\makecell{Our Work}\\
\midrule

\multirow{4}{*}{Hopper}
& expert
& $\mathbf{37.40 \pm 4.97}$
& $1.82 \pm 0.01$
& $18.86 \pm 2.97$
& $29.16 \pm 4.48$ \\

& full-replay
& $14.73 \pm 2.30$
& $1.86 \pm 0.01$
& $31.30 \pm 2.98$
& $\mathbf{34.72 \pm 4.66}$ \\

& medium
& $21.79 \pm 4.65$
& $11.79 \pm 0.91$
& $13.92 \pm 3.13$
& $\mathbf{28.63 \pm 3.01}$ \\

& random
& $3.24 \pm 0.60$
& $9.49 \pm 0.06$
& $16.96 \pm 2.49$
& $\mathbf{31.16 \pm 0.66}$ \\

\midrule

\multirow{4}{*}{Halfcheetah}
& expert
& $5.59 \pm 3.65$
& $8.56 \pm 5.14$
& $2.61 \pm 0.76$
& $\mathbf{19.80 \pm 2.34}$ \\

& full-replay
& $25.91 \pm 5.94$
& $33.47 \pm 9.61$
& $\mathbf{50.13 \pm 0.64}$
& $43.32 \pm 3.51$ \\

& medium
& $44.62 \pm 9.61$
& $41.80 \pm 1.12$
& $\mathbf{50.35 \pm 0.53}$
& $46.09 \pm 3.40$ \\

& random
& $2.25 \pm 0.01$
& $16.93 \pm 0.74$
& $15.18 \pm 0.27$
& $\mathbf{24.38 \pm 0.31}$ \\

\midrule

\multirow{4}{*}{Walker2d}
& expert
& $\mathbf{106.02 \pm 2.61}$
& $1.79 \pm 0.25$
& $75.19 \pm 14.66$
& $85.35 \pm 11.44$ \\

& full-replay
& $22.45 \pm 6.57$
& $1.03 \pm 0.15$
& $26.52 \pm 7.28$
& $\mathbf{63.57 \pm 11.93}$ \\

& medium
& $20.28 \pm 7.81$
& $21.32 \pm 3.98$
& $7.96 \pm 0.79$
& $\mathbf{29.62 \pm 6.02}$ \\

& random
& $1.51 \pm 0.18$
& $1.77 \pm 0.57$
& $1.18 \pm 0.14$
& $\mathbf{2.22 \pm 0.77}$ \\

\bottomrule
\end{tabular*}

\end{table}
Overall, the proposed method demonstrates strong and consistent policy performance across different tasks and dataset types, achieving the highest normalized return in eight of the twelve experimental settings and ranking second in the remaining four, with particularly clear advantages on the random and full-replay datasets. In comparison, BC performs well on the expert datasets of Hopper and Walker2d but degrades substantially on the random and some full-replay datasets, reflecting its strong dependence on the quality of the offline behavior data. MOMA-PPO achieves the best performance on the Halfcheetah full-replay and medium datasets, but its performance is less consistent across the other tasks. OMAR also exhibits noticeable performance variation across environments and yields relatively low returns in most of the settings considered. 
These results indicate that the proposed method adapts well to variations in offline data quality and coverage, and can extract decision-relevant information from offline trajectories rather than relying primarily on high-quality behavior demonstrations.

\subsection{Validation of MPIL and Approximate Regret}

\noindent
\begin{minipage}[t]{0.55\textwidth}
    \vspace{0pt}
    To validate the proposed offline evaluation mechanism, we jointly analyze $\delta_{\mathrm{MPIL}}$ and approximate regret across different offline dataset sizes.
    As shown in Fig.~\ref{fig:regret}, both $\delta_{\mathrm{MPIL}}$ and approximate regret decrease substantially at the early stage with increasing offline data in the Walker2d environment. This indicates that, as the posterior world model becomes more accurate, the downstream policy learned from it also exhibits improved performance. Once the dataset size reaches approximately 0.5M, both metrics gradually stabilize, suggesting that the information provided by additional offline data yields only limited further improvement to either the world model or the learned policy. 
    Moreover, the approximate regret follows a trend consistent with the true regret and stabilizes at a similar data scale, supporting its validity as an offline proxy for policy performance. The consistent stabilization of $\delta_{\mathrm{MPIL}}$, approximate regret, and true regret further provides a practical indication of data sufficiency, allowing the data scale required for stable policy performance to be estimated without online interaction.
    
    
\end{minipage}
\hfill
\begin{minipage}[t]{0.40\textwidth}
    \vspace{0pt}
    \centering
    \includegraphics[ width=\linewidth, trim={40pt 30pt 0pt 130pt}, clip ]{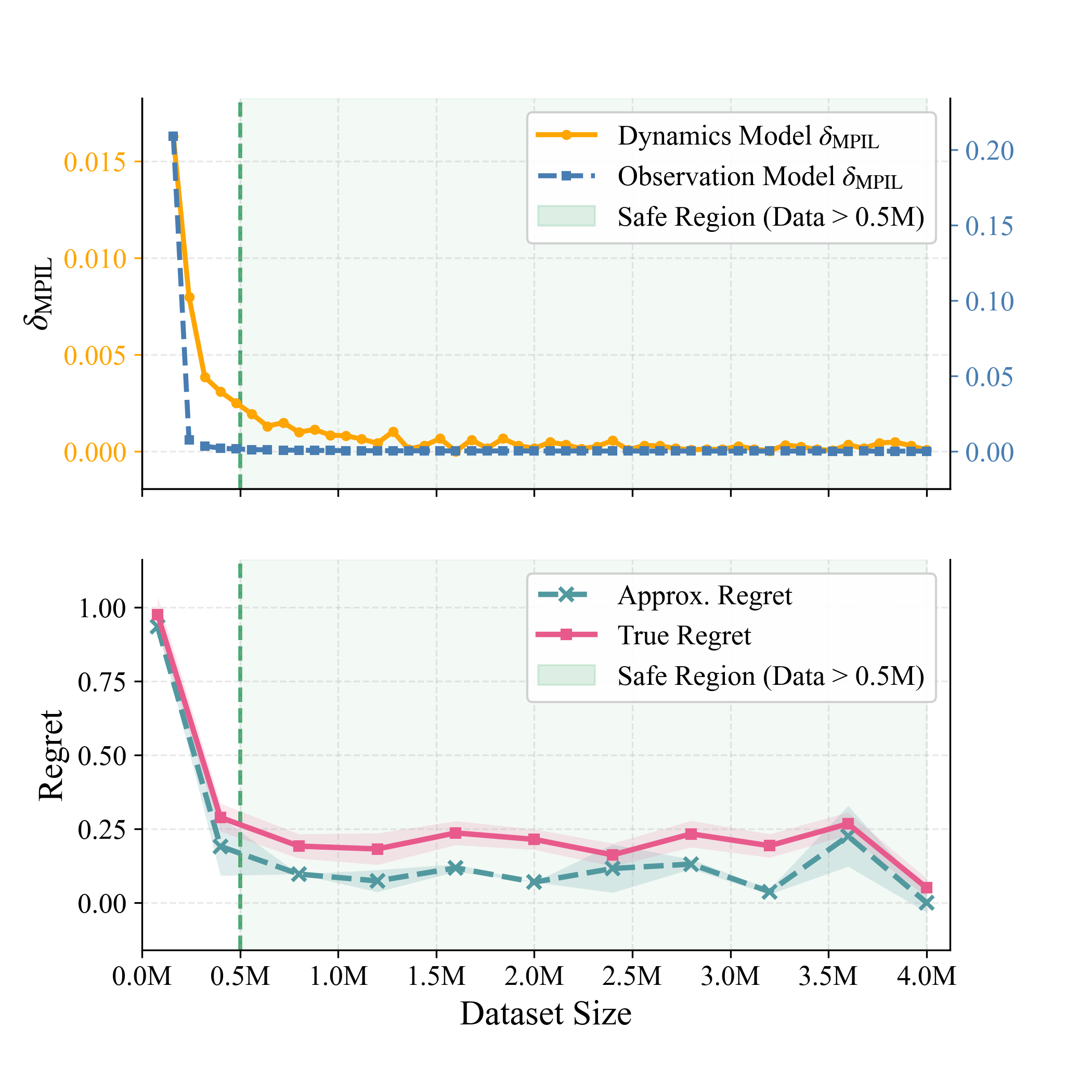}
    \captionsetup{skip=1 pt,hypcap=false}
    \captionof{figure}{Relationship between $\delta_{\mathrm{MPIL}}$, Approximate Regret, and True Regret under Different Offline Dataset Sizes on Walker2d. The True Regret Is Normalized Using the Standard D4RL Reference Scores.}
    \label{fig:regret}
\end{minipage}

\subsection{Effect of MPIL-Guided World Model Refinement}

To further examine the effect of MPIL-guided world model refinement, we compare the predictive performance of the world models and their corresponding downstream policies before and after refinement under the same data budget. 
For each of the four dataset types, the baseline model is trained directly on the full set of 0.5M samples, whereas the refined model is initialized with a subset and progressively updated with additional samples according to the MPIL criterion. Both models are evaluated on the corresponding validation set and on a mixed validation set constructed from all four dataset types. As an additional reference, we also train a separate world model on 0.5M samples from the mixed dataset and evaluate it using the same validation settings.
World model prediction performance is measured by the joint normalized root mean square error (JNRMSE), which jointly characterizes observation and reward prediction errors. Since simulator global states are excluded and not predicted, state prediction error is omitted.

As shown in Fig.~\ref{fig:walker2d_wm_error}, MPIL-guided refinement in the Walker2d environment generally reduces the prediction error of the world models trained on individual datasets.
The improvement is particularly evident on the mixed validation set, where the initial models exhibit substantially larger errors under the shifted data distribution. After refinement, their JNRMSE values are markedly reduced and are generally lower than those of the reference model trained directly on 0.5M mixed samples. 
These results indicate that MPIL-guided refinement can utilize a fixed amount of offline data more effectively, yielding more accurate and robust world model predictions across different data distributions.
The benefits of world model refinement are further reflected in policy learning. As shown in Fig.~\ref{fig:walker2d_policy}, across all four Walker2d datasets, policies trained with the refined world models achieve higher returns, with the improvement being particularly pronounced on the full-replay dataset. This indicates that improved world model predictions can provide more reliable imagined trajectories and thereby enhance offline policy optimization.

\begin{figure}[h]
    \centering
    \begin{subfigure}[b]{0.42\linewidth}
        \centering
        \includegraphics[height=5cm]{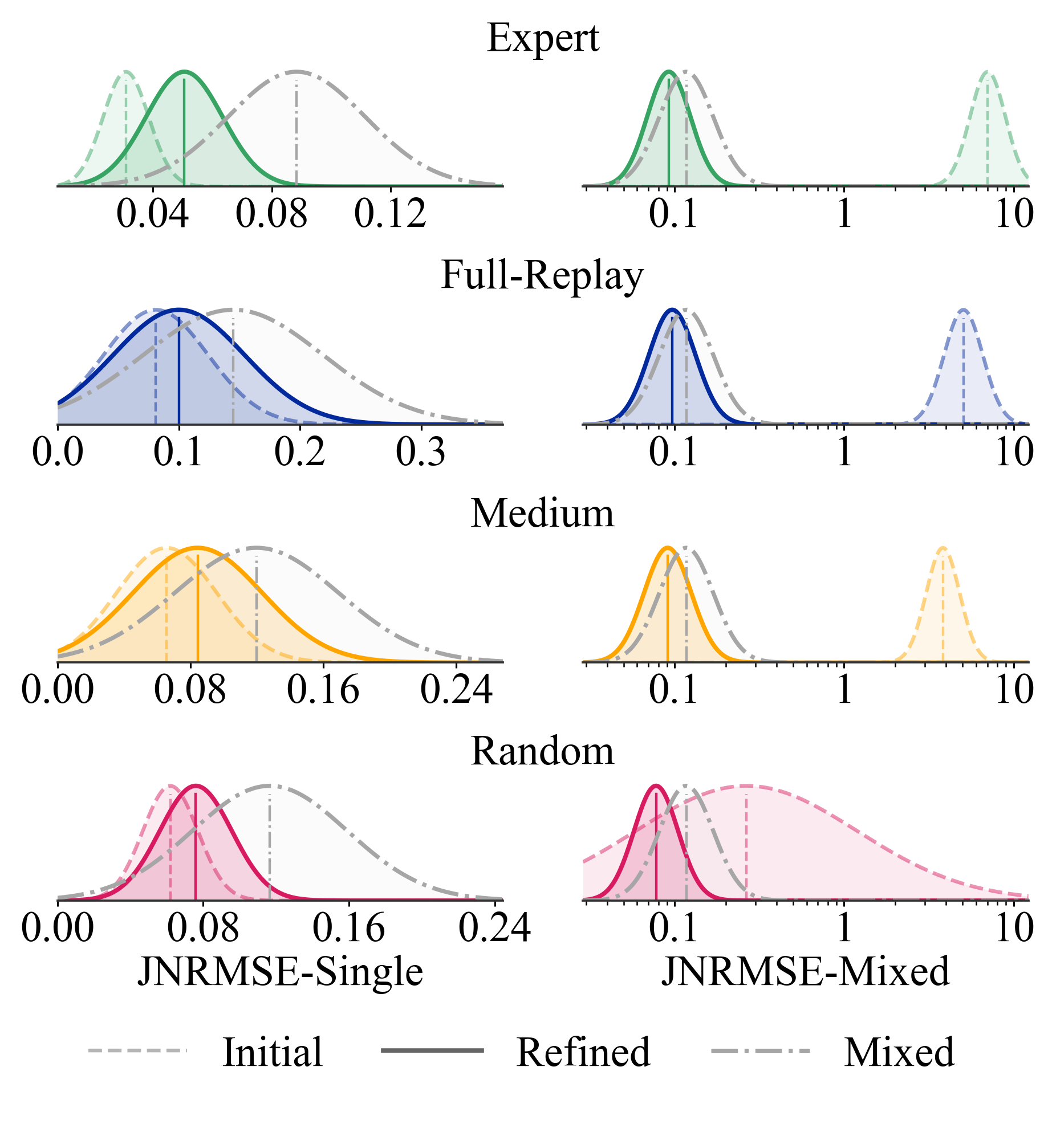}
        \caption{World model predictive performance}
        \label{fig:walker2d_wm_error}
    \end{subfigure}%
    \hfill%
    \begin{subfigure}[b]{0.57\linewidth}
        \centering
        \includegraphics[height=5cm]{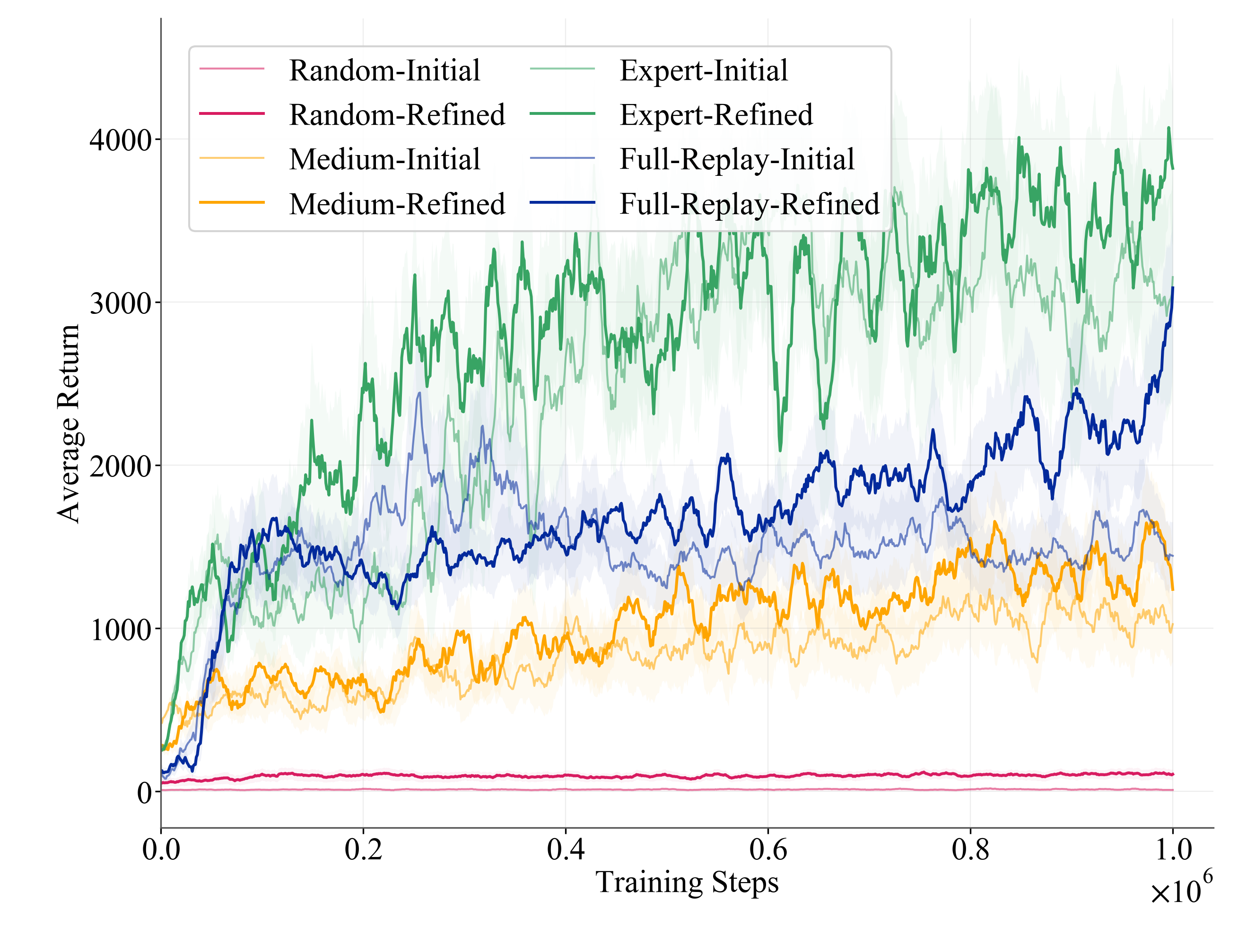}
        \caption{Policy performance}
        \label{fig:walker2d_policy}
    \end{subfigure}
    \caption{Effect of world model refinement on policy performance in Walker2d.}
    \label{fig:wm_refinement}
\end{figure}



%% file: sections/08_Conclusion.tex
\section{Conclusion}

This paper studies fully offline MARL under decentralized partial observability, where the global state is unavailable during training. We develop a Bayesian world model approach that infers latent states from local observation histories and models transition-reward and observation dynamics through posterior distributions. MPIL is further introduced to refinement of the posterior world model and enable fully offline evaluation, while posterior-consistent imagined trajectories are used to optimize cooperative multi-agent policies.
Theoretical analysis establishes a monotone upper bound on policy regret in terms of MPIL and further characterizes its dependence on the size and coverage of the offline dataset. 
Experiments on partially observable multi-agent MuJoCo tasks show that the proposed method achieves competitive policy performance across datasets of varying quality. MPIL decreases consistently with policy regret as offline information increases, while MPIL-guided refinement further improves world model prediction and downstream policy performance.

Future work will investigate zero-shot generalization to unseen agent configurations, interaction structures, and environment dynamics. We will also explore shared or conditioned world models that can reuse learned representations and dynamics across different task settings rather than retraining a new model from scratch. When model refinement is required, MPIL may further be used to identify unreliable regions and select the most informative offline samples for targeted updates, improving the data efficiency of world model refinement.

%% file: sections/09_Appendix.tex
\section{Appendix}
\subsection{Algorithm}
\begin{algorithm}[htbp]
\caption{MPIL-Guided Bayesian Offline MARL}
\label{alg:pil_marl}
\begin{algorithmic}[1]
\REQUIRE Raw offline data subsets $\{\mathcal{D}_m\}_{m=1}^{R}$, convergence tolerance $\epsilon_{\Delta}$, priors $P_{\Theta}$ and $P_{\tilde{\Omega}}$

\STATE Construct $\hat{\mathcal{D}}^{(1)}$ from $\mathcal{D}_1$ via latent-state inference
\STATE Learn the initial ensemble parameters $\Phi^{(1)}$ by minimizing $\mathcal{L}_{\mathrm{WM}}^{(1)}$
\STATE Initialize the terminating stage: $m^{\star}\leftarrow R$

\FOR{$m=1,\ldots,R-1$}
    \STATE Incorporate $\mathcal{D}_{m+1}$ to construct $\hat{\mathcal{D}}^{(m+1)}$
    \STATE Obtain $\Phi^{(m+1)}$ by warm-starting from $\Phi^{(m)}$ and minimizing $\mathcal{L}_{\mathrm{WM}}^{(m+1)}$
    \STATE Compute $\delta_{\mathrm{MPIL}}^{(m+1)}=\left|\mathcal{I}_{\hat{\mathcal{D}}_{\mathrm{val}}}(\Phi^{(m+1)})-\mathcal{I}_{\hat{\mathcal{D}}_{\mathrm{val}}}(\Phi^{(m)})\right|$
    \IF{$\delta_{\mathrm{MPIL}}^{(m+1)}\leq\epsilon_{\Delta}$}
        \STATE Record the terminating stage: $m^{\star}\leftarrow m+1$
        \STATE \textbf{break}
    \ENDIF
\ENDFOR

\STATE Construct the final empirical posterior:
$P_{\Theta,\tilde{\Omega}}^{(m^{\star})}(\cdot) \leftarrow \frac{1}{K}\sum_{k=1}^{K} \delta_{(\theta_k^{(m^{\star})},\tilde{\omega}_k^{(m^{\star})})}(\cdot)$

\STATE Sample a world model and an imagined trajectory: $(\theta,\tilde{\omega}) \sim P_{\Theta,\tilde{\Omega}}^{(m^{\star})}, \quad \check{h}_{\infty} \sim P_{\infty,\tilde{\pi}}(\theta,\tilde{\omega})$

\STATE Optimize the joint policy via CTDE:
$\tilde{\pi}_{\operatorname{Bayes}}^{\star} \leftarrow \underset{\tilde{\pi}}{\arg\max}\; J_{\mathrm{Bayes}}^{\tilde{\pi}} \left(P_{\Theta,\tilde{\Omega}}^{(m^{\star})}\right)$

\RETURN Bayes-optimal joint policy $\tilde{\pi}_{\operatorname{Bayes}}^{\star}$
\end{algorithmic}
\end{algorithm}

\subsection{Implementation Details}
\label{app:details}
\subsubsection{World Model Implementation}
\label{app:wm_implementation}
The latent-state inference module uses a recurrent encoder to summarize the joint observation--action history into a centralized latent state. At each time step, the joint observation and previous joint action are first projected through a linear layer with ReLU activation and then passed to a GRUCell. Given the inferred latent state and current joint action, an observation decoder predicts the next joint observation, while a shared transition--reward network predicts the next latent state and reward. The implementation details are summarized in Table~\ref{tab:latent_model_hyperparameters}.

\begin{table}[h]
    \centering
    \caption{Latent-state inference model hyperparameters.}
    \label{tab:latent_model_hyperparameters}
    \begin{tabular}{lc}
        \toprule
        \textbf{Hyperparameter} & \textbf{Value} \\
        \midrule
        Observation dimension & 17 \\
        Joint action dimension & 6 \\
        Latent state dimension & 64 \\
        GRU hidden dimension & 64 \\
        Encoder projection dimension & 256 \\
        Decoder hidden dimension & 256 \\
        Transition--reward hidden dimension & 256 \\
        Decoder MLP layers & 2 \\
        Transition--reward MLP layers & 2 \\
        Sequence length & 32 \\
        Input normalization & Yes \\
        Optimizer & Adam \\
        Pretraining learning rate & $3\times10^{-4}$ \\
        Fine-tuning learning rate & $1\times10^{-4}$ \\
        Activation function & ReLU; GRU sigmoid/tanh \\
        $\lambda\_{\mathrm{recon}}$ & 1.0 \\
        $\lambda\_{z}$ & 1.0 \\
        $\lambda\_{\mathrm{reward}}$ & 1.0 \\
        \bottomrule
    \end{tabular}
\end{table}

The inferred latent states are then used to construct the posterior world model described in Section~\ref{subsec:bayesian_wm_formulation}. The posterior over the transition--reward and observation-model parameters is approximated by an ensemble of probabilistic neural networks, with elite members selected according to holdout performance for posterior prediction and imagined rollouts. The ensemble is optimized using the empirical MPIL objective defined in Section~\ref{subsec:bayesian_wm_formulation}. The main implementation hyperparameters are summarized in Table~\ref{tab:wm_hyperparameters}.

\begin{table}[h]
    \centering
    \caption{Posterior world model implementation hyperparameters.}
    \label{tab:wm_hyperparameters}
    \begin{tabular}{lc}
        \toprule
        \textbf{Hyperparameter} & \textbf{Value} \\
        \midrule
        Ensemble size & 7 \\
        Number of elite models & 5 \\
        Hidden dimension & 300 \\
        Probabilistic output & Mean and variance \\
        Holdout ratio & 0.2 \\
        Optimizer & Adam\\
        Learning rate & $1\times10^{-3}$ \\
        Activation function & Swish; linear output; bounded log-variance \\
        \bottomrule
    \end{tabular}
\end{table}

\subsubsection{World Model Refinement}
\label{app:wm_refinement}
The refinement procedure follows Section~\ref{subsec:4.3} and uses the same world model ensemble architecture as~\ref{app:wm_implementation}. Each stage is warm-started from the preceding ensemble and optimized using the empirical MPIL objective in Section~\ref{subsec:bayesian_wm_formulation}, with holdout-based validation and early stopping. The main hyperparameters are summarized in Table~\ref{tab:refinement_hyperparameters}.

\begin{table}[h]
    \centering
    \caption{Hyperparameters for MPIL-based world model refinement.}
    \label{tab:refinement_hyperparameters}
    \begin{tabular}{lc}
        \toprule
        \textbf{Hyperparameter} & \textbf{Value} \\
        \midrule
        Maximum refinement rounds & 50 \\
        Epochs per round & 10 \\
        Batch size & 512 \\
        Holdout ratio & 0.2 \\
        Early-stopping patience & 5 \\
        Optimizer & Adam\\
        Learning rate & $1\times10^{-3}$ \\
        Activation function & Swish; linear output; bounded log-variance \\
        \bottomrule
    \end{tabular}
\end{table}

\subsubsection{Policy Learning with the Refined World Model}
\label{app:policy_learning}
Policy optimization follows the SAC-based CTDE formulation in Section~\ref{subsec:4.3}. Each decentralized actor is implemented as a two-layer MLP with a Gaussian policy output, while the centralized critic uses twin Q-networks with the same backbone structure. Imagined transitions generated from the refined world model ensemble are combined with offline data for actor and critic updates. The main policy-learning and rollout hyperparameters are summarized in Table~\ref{tab:policy_hyperparameters}.

\begin{table}[h]
    \centering
    \caption{Policy learning and model rollout hyperparameters.}
    \label{tab:policy_hyperparameters}
    \begin{tabular}{lc}
        \toprule
        \textbf{Hyperparameter} & \textbf{Value} \\
        \midrule
        Actor hidden dimensions & $[256,256]$ \\
        Critic hidden dimensions & $[256,256]$ \\
        Optimizer & Adam\\
        Learning rate & $5\times10^{-5}$ \\
        Activation function & ReLU; linear output; tanh squash \\
        Discount factor $\gamma$ & 0.99 \\
        Target update coefficient & 0.005 \\
        Batch size & 256 \\
        Total training steps & $1\times10^6$ \\
        Rollout length & 1 \\
        Rollout batch size & 50,000 \\
        Rollout frequency & 1000 steps \\
        Conservative loss weight & 5.0 \\
        Evaluation episodes & 10 \\
        \bottomrule
    \end{tabular}
\end{table}

\subsubsection{Baseline Implementation}
\label{app:baseline_implementation}

We compare our method with behavior cloning (BC), OMAR, and
MOMA-PPO~\citep{barde2024moma} under the same D4RL tasks and
multi-agent action decomposition. The main baseline hyperparameters are summarized in
Table~\ref{tab:baseline_hyperparameters}.

BC is implemented as deterministic behavior cloning with a
mean-squared action loss. OMAR follows the conservative critic
regularization and candidate-action-based actor rectification of
\citet{pan2022plan}, with a conservative critic weight of 2.0, a
behavior-cloning weight of 1.0, an OMAR objective weight of 0.5, a
candidate-action number of 8, and a critic-target clipping threshold
of 100.

MOMA-PPO follows the model-based offline MARL procedure of
\citet{barde2024moma}. A learned world model is used to generate
synthetic multi-agent trajectories initialized from the offline
dataset. At each rollout step, actions are sampled from the current
decentralized policies and the world model predicts the subsequent
transition and reward. Model uncertainty is used to penalize
unreliable synthetic transitions and to terminate rollouts when the
model becomes insufficiently reliable. The resulting synthetic
trajectories are then used to update the agents with MAPPO under
centralized training and decentralized execution.

\begin{table}[H]
    \centering
    \caption{Main hyperparameters of the baseline methods for Walker2d-random.}
    \label{tab:baseline_hyperparameters}
    \begin{tabular}{lccc}
        \toprule
        \textbf{Hyperparameter} & \textbf{BC} & \textbf{OMAR} & \textbf{MOMA-PPO} \\
        \midrule
        Training steps
        & $1\times10^6$
        & $1\times10^6$
        & $1\times10^6$ \\

        Batch size
        & 2048
        & 512
        & 256 \\

        Hidden dimension
        & 128
        & 300
        & $[256,256]$ \\

        Learning rate
        & $3\times10^{-5}$
        & $3\times10^{-5}$
        & $5\times10^{-5}$ \\

        Discount factor
        & --
        & 0.99
        & 0.99 \\

        Entropy coefficient
        & --
        & 0.02
        & 0.05 \\

        Gradient clipping
        & 5.0
        & 5.0
        & -- \\

        \bottomrule
    \end{tabular}
\end{table}

\subsubsection{Multi-Agent, Local-Observation, and Dataset Configuration}
\label{app:multi_agent_configuration}

Table~\ref{tab:multi_agent_configuration} gives the exact task decomposition. All indices are zero-based, and entries within each cell are ordered by agent ID. At time $t$, actor $i$ receives only $o_t^i=s_t[\mathcal O_i]$ and writes its output to the specified coordinates of the joint action $a_t$; the environment then executes the complete joint action.

\begin{table}[H]
    \centering
    \caption{Multi-agent action and observation decomposition. The task dimensions are shown as (observation, action).}
    \label{tab:multi_agent_configuration}
    \scriptsize
    \setlength{\tabcolsep}{3pt}
    \renewcommand{\arraystretch}{1.15}
    \resizebox{\textwidth}{!}{%
    \begin{tabular}{@{}p{0.13\textwidth}p{0.06\textwidth}p{0.16\textwidth}p{0.43\textwidth}p{0.18\textwidth}@{}}
        \toprule
        \textbf{Task (dims.)}
        & \makecell[l]{\textbf{No.}\\\textbf{agents}}
        & \makecell[l]{\textbf{Action dim.:}\\\textbf{indices}}
        & \textbf{Local observation indices $\mathcal O_i$}
        & \makecell[l]{\textbf{Shared/overlap}\\\textbf{indices}} \\
        \midrule
        HalfCheetah-v2\newline $(17,6)$
        & 3
        & \makecell[l]{$2:[0,1]$\\$2:[2,3]$\\$2:[4,5]$}
        & \makecell[l]{$\mathcal O_0=[0,1,2,3,8,9,10,11,12]$\\$\mathcal O_1=[0,1,4,5,8,9,10,13,14]$\\$\mathcal O_2=[0,1,6,7,8,9,10,15,16]$}
        & $[0,1,8,9,10]$ \\
        \addlinespace
        Hopper-v2\newline $(11,3)$
        & 3
        & \makecell[l]{$1:[0]$\\$1:[1]$\\$1:[2]$}
        & \makecell[l]{$\mathcal O_0=[0,1,2,5,6,7,8]$\\$\mathcal O_1=[0,1,3,5,6,7,9]$\\$\mathcal O_2=[0,1,4,5,6,7,10]$}
        & $[0,1,5,6,7]$ \\
        \addlinespace
        Walker2d-v2\newline $(17,6)$
        & 2
        & \makecell[l]{$3:[0,1,2]$\\$3:[3,4,5]$}
        & \makecell[l]{$\mathcal O_0=[0,1,2,3,4,8,9,10,11,12,13]$\\$\mathcal O_1=[0,1,5,6,7,8,9,10,14,15,16]$}
        & $[0,1,8,9,10]$ \\
        \bottomrule
    \end{tabular}
    }
\end{table}

Thus, partial observability is imposed on the decentralized actor inputs: the union of local observations reconstructs the original D4RL observation, whereas no single actor observes it in full. 

\paragraph{Dataset IDs and source.}
We use the D4RL Gym-MuJoCo \texttt{v2} datasets~\citep{fu2020d4rl} listed in Table~\ref{tab:d4rl_dataset_ids}.

\begin{table}[H]
    \centering
    \caption{D4RL Gym-MuJoCo \texttt{v2} dataset IDs used in our experiments.}
    \label{tab:d4rl_dataset_ids}
    \scriptsize
    \setlength{\tabcolsep}{4pt}
    \renewcommand{\arraystretch}{1.12}
    \resizebox{\textwidth}{!}{%
    \begin{tabular}{@{}lcccc@{}}
        \toprule
        \textbf{Task} & \textbf{Random} & \textbf{Medium} & \textbf{Expert} & \textbf{Full-replay} \\
        \midrule
        HalfCheetah
        & \texttt{halfcheetah-random-v2}
        & \texttt{halfcheetah-medium-v2}
        & \texttt{halfcheetah-expert-v2}
        & \texttt{halfcheetah-full-replay-v2} \\
        Hopper
        & \texttt{hopper-random-v2}
        & \texttt{hopper-medium-v2}
        & \texttt{hopper-expert-v2}
        & \texttt{hopper-full-replay-v2} \\
        Walker2d
        & \texttt{walker2d-random-v2}
        & \texttt{walker2d-medium-v2}
        & \texttt{walker2d-expert-v2}
        & \texttt{walker2d-full-replay-v2} \\
        \bottomrule
    \end{tabular}
    }
\end{table}

The \texttt{full-replay-v2} datasets are the complete replay buffers distributed by D4RL, rather than renamed \texttt{medium-replay} datasets or mixtures constructed by us. They contain neither model rollouts generated by our method nor trajectories resampled from the evaluation environments.

\subsection{Proof}

\begin{asm}[Latent sufficiency and factorization]
\label{asm:latent_structure}
For the theoretical analysis, we consider the realizable idealization in which $\hat{s}_t$ is predictively sufficient---future environment variables depend on the joint history only through $\hat{s}_t$. Let $x=(\hat{s},\tilde a)$, $u=(\hat{s}',r)$, and $v=\tilde z'$. The true and learned one-step kernels used in the analysis are
\begin{align}
K^\star(\mathrm du,\mathrm dv\mid x)
&:=P_{T,R}^\star(\mathrm du\mid x)
P_{\tilde O}^\star(\mathrm dv\mid x),
\nonumber\\
K_{\theta,\tilde\omega}(\mathrm du,\mathrm dv\mid x)
&:=P_{T,R}(\mathrm du\mid x,\theta)
P_{\tilde O}(\mathrm dv\mid x,\tilde\omega).
\end{align}
For every $\mu\in\{\nu_{\hat{\mathcal D}_N},\rho^{\tilde\pi^\star},\rho^{\tilde\pi_{\mathrm{Bayes}}^\star}\}$, we further assume that $K^\star(\cdot\mid x)\ll K_{\theta,\tilde\omega}(\cdot\mid x)$ for $\mu$-almost every $x$ and posterior-almost every $(\theta,\tilde\omega)$, and that
\begin{equation}
\mathbb E_{x\sim\mu}\mathbb E_{(\theta,\tilde\omega)\sim P_{\Theta,\tilde\Omega}(\cdot\mid\hat{\mathcal D}_N)}
\!\left[D_{\mathrm{KL}}\!\left(K^\star(\cdot\mid x)\middle\|K_{\theta,\tilde\omega}(\cdot\mid x)\right)\right]<\infty.
\end{equation}
These conditions define the scope of the theoretical analysis and are not implied by the recurrent encoder. Equation~(1) is a practical deterministic history encoder; Theorem~\ref{upper-bound} does not quantify deviations from this idealized setting.
\end{asm}

\subsubsection{Proof of Proposition~\ref{proposition_nll_mpil}}
\label{appendix:proof_prp1}

\begin{proof}
For notational compactness, let $x=(\hat{s},\tilde{a})$, let $y_{T,R}=(s',r)$, and let $y_{\tilde O}=\tilde{o}'$. For any fixed $x$ and posterior parameter realization $\theta$, the definition of the KL divergence gives
\begin{align}
&D_{\mathrm{KL}}\!\left( P_{T,R}^{\star}(\cdot\mid x) \middle\| P_{T,R}(\cdot\mid x,\theta) \right)
\nonumber\\
&\quad= \mathbb{E}_{y_{T,R}\sim P_{T,R}^{\star}(\cdot\mid x)} \left[ \log P_{T,R}^{\star}(y_{T,R}\mid x) -\log P_{T,R}(y_{T,R}\mid x,\theta) \right]
\nonumber\\
&\quad= \mathbb{E}_{y_{T,R}\sim P_{T,R}^{\star}(\cdot\mid x)} \left[ -\log P_{T,R}(y_{T,R}\mid x,\theta) \right] -H\!\left(P_{T,R}^{\star}(\cdot\mid x)\right).
\label{eq:appendix_kl_ce_tr}
\end{align}
Likewise, for any fixed $x$ and $\tilde{\omega}$,
\begin{align}
&D_{\mathrm{KL}}\!\left( P_{\tilde O}^{\star}(\cdot\mid x) \middle\| P_{\tilde O}(\cdot\mid x,\tilde{\omega}) \right)
\nonumber\\
&\quad= \mathbb{E}_{y_{\tilde O}\sim P_{\tilde O}^{\star}(\cdot\mid x)} \left[ -\log P_{\tilde O}(y_{\tilde O}\mid x,\tilde{\omega}) \right] -H\!\left(P_{\tilde O}^{\star}(\cdot\mid x)\right).
\label{eq:appendix_kl_ce_obs}
\end{align}
These identities hold for probability mass functions and, with $H$ interpreted as differential entropy, for probability densities, whenever the displayed quantities are well defined.

Substituting Eqs.~(\ref{eq:appendix_kl_ce_tr}) and (\ref{eq:appendix_kl_ce_obs}) into the definition of MPIL in Eq.~(\ref{PIL}), and then taking expectations with respect to the fixed offline data distribution and the posterior, yields
\begin{align}
\mathcal{I}_{\hat{\mathcal D}_N} ={}
& \mathbb{E}_{x\sim\nu_{\hat{\mathcal D}_N}} \mathbb{E}_{(\theta,\tilde{\omega})\sim P_{\Theta,\tilde{\Omega}}(\cdot\mid\hat{\mathcal D}_N)}
\Big[ \mathbb{E}_{y_{T,R}\sim P_{T,R}^{\star}(\cdot\mid x)} [-\log P_{T,R}(y_{T,R}\mid x,\theta)]
\nonumber\\
&\hspace{35mm}+ \mathbb{E}_{y_{\tilde O}\sim P_{\tilde O}^{\star}(\cdot\mid x)} [-\log P_{\tilde O}(y_{\tilde O}\mid x,\tilde{\omega})] \Big]
\nonumber\\
&- \mathbb{E}_{x\sim\nu_{\hat{\mathcal D}_N}} \Big[ H\!\left(P_{T,R}^{\star}(\cdot\mid x)\right) +H\!\left(P_{\tilde O}^{\star}(\cdot\mid x)\right) \Big].
\label{eq:appendix_mpil_nll_expansion}
\end{align}
The first term on the right-hand side of Eq.~(\ref{eq:appendix_mpil_nll_expansion}) is exactly the posterior-averaged NLL $\mathcal{L}_{\mathrm{NLL}}$ defined in Eq.~(\ref{ideal_NLL}), while the second term is $\mathcal{H}^{\star}_{\hat{\mathcal D}_N}$. Consequently,
\begin{equation}
\mathcal{I}_{\hat{\mathcal D}_N} = \mathcal{L}_{\mathrm{NLL}} - \mathcal{H}^{\star}_{\hat{\mathcal D}_N},
\end{equation}
which proves Eq.~(\ref{MPIL_NLL}). Because the true-environment entropy term depends only on $P_{T,R}^{\star}$, $P_{\tilde O}^{\star}$, and the fixed distribution $\nu_{\hat{\mathcal D}_N}$, it is independent of the learned world model parameters. Hence differences in posterior-averaged NLL between world models evaluated under the same offline data distribution equal the corresponding differences in MPIL.
\end{proof}

\subsubsection{Proof of Theorem~\ref{upper-bound}}
\label{appendix:proof_thm1}

\begin{asm}
\label{asm:bounded_coverage}
For each relevant policy $\tilde{\pi}\in \{\tilde{\pi}^{\star},\tilde{\pi}_{\mathrm{Bayes}}^{\star}\}$, assume that the policy-induced state-action distribution $\rho^{\tilde{\pi}}$ is absolutely continuous with respect to the offline data distribution $\nu_{\hat{\mathcal D}_N}$ and that
\begin{align}
\label{eq:appendix_concentrability_coefficient}
C :={}& \max_{\tilde{\pi}\in \{\tilde{\pi}^{\star}, \tilde{\pi}_{\mathrm{Bayes}}^{\star}\}} \operatorname*{ess\,sup}_{(\hat{s},\tilde a)} \frac{ \rho^{\tilde{\pi}}(\hat{s},\tilde a) }{ \nu_{\hat{\mathcal D}_N}(\hat{s},\tilde a) } < \infty,
\end{align}
where the essential supremum is taken with respect to $\nu_{\hat{\mathcal D}_N}$.
\end{asm}

Assumption~\ref{asm:bounded_coverage} requires the offline data distribution to cover the state--action regions visited by the two policies relevant to the regret comparison, but does not require uniform coverage over all possible policies.

This coverage condition is closely related to the bounded importance-weight assumption commonly adopted in off-policy evaluation \citep{thomas2016dataefficient}: both require the target policies to remain within the support of the offline data-generating distribution. We use it here only as an explicit conditional assumption and do not claim that the policy-learning algorithm enforces it automatically.

\begin{proof}
For a joint policy $\tilde{\pi}$, write its posterior-predictive value as
\begin{equation}
\label{eq:appendix_bayesian_value}
J_{B}^{\tilde{\pi}} := J_{\mathrm{Bayes}}^{\tilde{\pi}} \left(P_{\Theta,\tilde{\Omega}}(\cdot\mid\hat{\mathcal D}_N)\right).
\end{equation}
Let $\tilde{\pi}^{\star}$ maximize the return in the true environment $\mathcal M^{\star}$, and let $\tilde{\pi}_{\mathrm{Bayes}}^{\star}$ maximize $J_B^{\tilde{\pi}}$. Adding and subtracting the two corresponding posterior-predictive values in Eq.~(\ref{Regret}) gives
\begin{align}
&\operatorname{Regret} \left(\mathcal M^{\star},\hat{\mathcal D}_N\right)
\nonumber\\
&={} J^{\tilde{\pi}^{\star}}(\mathcal M^{\star}) -J_B^{\tilde{\pi}^{\star}} +J_B^{\tilde{\pi}^{\star}} -J_B^{\tilde{\pi}_{\mathrm{Bayes}}^{\star}} +J_B^{\tilde{\pi}_{\mathrm{Bayes}}^{\star}} -J^{\tilde{\pi}_{\mathrm{Bayes}}^{\star}}(\mathcal M^{\star})
\nonumber\\
&\leq \left| J^{\tilde{\pi}^{\star}}(\mathcal M^{\star}) -J_B^{\tilde{\pi}^{\star}} \right| + \left| J_B^{\tilde{\pi}_{\mathrm{Bayes}}^{\star}} -J^{\tilde{\pi}_{\mathrm{Bayes}}^{\star}}(\mathcal M^{\star}) \right|,
\label{eq:appendix_two_policy_regret}
\end{align}
where the inequality follows from $J_B^{\tilde{\pi}^{\star}} -J_B^{\tilde{\pi}_{\mathrm{Bayes}}^{\star}}\leq0$. Thus, it suffices to control the return discrepancy for these two policies rather than uniformly over all policies.

We next establish such a bound for an arbitrary fixed joint policy $\tilde{\pi}$. Define its posterior-predictive trajectory-prefix law by
\begin{equation}
\label{eq:appendix_posterior_predictive_prefix}
\widehat P_{t+1,\tilde{\pi}}(\hat{\mathcal D}_N) := \mathbb E_{(\theta,\tilde{\omega})\sim P_{\Theta,\tilde{\Omega}}(\cdot\mid\hat{\mathcal D}_N)} \left[P_{t+1,\tilde{\pi}}(\theta,\tilde{\omega})\right].
\end{equation}
We assume that the true and posterior-predictive processes have the same initial latent-state and initial-observation distribution; an
additional initial-distribution divergence term is required otherwise. Since rewards lie in $[r_{\min},r_{\max}]$, the variational characterization of total variation implies
\begin{align}
&\left| J^{\tilde{\pi}}(\mathcal M^{\star})-J_B^{\tilde{\pi}} \right|
\nonumber\\
&\leq (r_{\max}-r_{\min}) \sum_{t=0}^{\infty}\gamma^t D_{\mathrm{TV}}\!\left( P_{t+1,\tilde{\pi}}(\mathcal M^{\star}), \widehat P_{t+1,\tilde{\pi}}(\hat{\mathcal D}_N) \right)
\nonumber\\
&= \mathcal R_{\max} \mathbb E_{t\sim\mathcal G(\gamma)} \left[ D_{\mathrm{TV}}\!\left( P_{t+1,\tilde{\pi}}(\mathcal M^{\star}), \widehat P_{t+1,\tilde{\pi}}(\hat{\mathcal D}_N) \right) \right],
\label{eq:appendix_fixed_policy_tv}
\end{align}
where $\mathcal R_{\max}=(r_{\max}-r_{\min})/(1-\gamma)$ and $\Pr_{\mathcal G(\gamma)}(T=t)=(1-\gamma)\gamma^t$.

Let $g(x):=\sqrt{1-\exp(-x)}$. The Bretagnolle--Huber inequality, convexity of KL divergence in its second argument, and monotonicity and concavity of $g$ yield
\begin{align}
&\mathbb E_{t\sim\mathcal G(\gamma)} \left[ D_{\mathrm{TV}}\!\left( P_{t+1,\tilde{\pi}}(\mathcal M^{\star}), \widehat P_{t+1,\tilde{\pi}}(\hat{\mathcal D}_N) \right) \right]
\nonumber\\
&\leq g\!\Bigg( \mathbb E_{t\sim\mathcal G(\gamma)} \mathbb E_{(\theta,\tilde{\omega})\sim P_{\Theta,\tilde{\Omega}}(\cdot\mid\hat{\mathcal D}_N)} \left[ D_{\mathrm{KL}}\!\left( P_{t+1,\tilde{\pi}}(\mathcal M^{\star}) \middle\| P_{t+1,\tilde{\pi}}(\theta,\tilde{\omega}) \right) \right] \Bigg).
\label{eq:appendix_tv_to_kl}
\end{align}

For $x=(\hat{s},\tilde a)$, define the posterior-averaged one-step model discrepancy
\begin{align}
\label{eq:appendix_pointwise_discrepancy}
d_{\hat{\mathcal D}_N}(x) :={}& \mathbb E_{(\theta,\tilde{\omega})\sim P_{\Theta,\tilde{\Omega}}(\cdot\mid\hat{\mathcal D}_N)} \Big[ D_{\mathrm{KL}}\!\left( P_{T,R}^{\star}(\cdot\mid x) \middle\| P_{T,R}(\cdot\mid x,\theta) \right)
\nonumber\\
&\hspace{32mm}+ D_{\mathrm{KL}}\!\left( P_{\tilde O}^{\star}(\cdot\mid x) \middle\| P_{\tilde O}(\cdot\mid x,\tilde{\omega}) \right) \Big].
\end{align}
Under Assumption~\ref{asm:latent_structure}, the KL identity for product measures gives, for each posterior-supported $(\theta,\tilde\omega)$,
\begin{align}
&D_{\mathrm{KL}}\!\left(
K^\star(\cdot\mid x)
\middle\|
K_{\theta,\tilde\omega}(\cdot\mid x)
\right)
\nonumber\\
&\quad=
D_{\mathrm{KL}}\!\left(
P_{T,R}^\star(\cdot\mid x)
\middle\|
P_{T,R}(\cdot\mid x,\theta)
\right)
+D_{\mathrm{KL}}\!\left(
P_{\tilde O}^\star(\cdot\mid x)
\middle\|
P_{\tilde O}(\cdot\mid x,\tilde\omega)
\right).
\label{eq:appendix_product_kernel_kl}
\end{align}
Thus, $d_{\hat{\mathcal D}_N}(x)$ is exactly the posterior-averaged KL divergence between the true and learned one-step kernels; no mutual-information term appears because both kernels factorize in Assumption~\ref{asm:latent_structure}.

For a fixed joint policy $\tilde\pi$, the true and learned trajectory-prefix laws factorize as
\begin{align}
P_{t+1,\tilde\pi}(\mathcal M^\star)
&=P_0\prod_{j=0}^{t}
\tilde\pi(\tilde a_j\mid\tilde h_j)
K^\star(u_j,v_j\mid x_j),
\nonumber\\
P_{t+1,\tilde\pi}(\theta,\tilde\omega)
&=P_0\prod_{j=0}^{t}
\tilde\pi(\tilde a_j\mid\tilde h_j)
K_{\theta,\tilde\omega}(u_j,v_j\mid x_j),
\label{eq:appendix_trajectory_kernel_factorization}
\end{align}
where $u_j=(\hat s_{j+1},r_j)$, $v_j=\tilde z_{j+1}$, and the common $P_0$ is the initial latent-state and observation law specified above. The common initial and policy factors therefore cancel in the likelihood ratio. Applying the KL chain rule and then Eq.~(\ref{eq:appendix_product_kernel_kl}) gives
\begin{align}
&\mathbb E_{(\theta,\tilde{\omega})\sim P_{\Theta,\tilde{\Omega}}(\cdot\mid\hat{\mathcal D}_N)} \left[ D_{\mathrm{KL}}\!\left(
P_{t+1,\tilde{\pi}}(\mathcal M^{\star}) \middle\| P_{t+1,\tilde{\pi}}(\theta,\tilde{\omega}) \right) \right]
\nonumber\\
&\qquad= \sum_{j=0}^{t} \mathbb E_{x_j\sim P_j^{\tilde{\pi}}(\mathcal M^{\star})} \left[d_{\hat{\mathcal D}_N}(x_j)\right].
\label{eq:appendix_history_kl_chain}
\end{align}
Define the normalized policy-induced state--action distribution by
\begin{equation}
\label{eq:appendix_policy_occupancy}
\rho^{\tilde{\pi}}
:=
\mathbb E_{t\sim\mathcal{AG}(\gamma)}
\mathbb E_{j\sim\mathcal U_t}
\left[P_j^{\tilde{\pi}}(\mathcal M^{\star})\right],
\end{equation}
where $\mathcal U_t$ is uniform on $\{0,\ldots,t\}$ and
$\Pr_{\mathcal{AG}(\gamma)}(T=t)
=(1-\gamma)^2(t+1)\gamma^t$. Combining
Eqs.~(\ref{eq:appendix_history_kl_chain}) and
(\ref{eq:appendix_policy_occupancy}), we obtain
\begin{align}
&\mathbb E_{t\sim\mathcal G(\gamma)}
\mathbb E_{(\theta,\tilde{\omega})\sim
P_{\Theta,\tilde{\Omega}}(\cdot\mid\hat{\mathcal D}_N)}
\left[
D_{\mathrm{KL}}\!\left(
P_{t+1,\tilde{\pi}}(\mathcal M^{\star})
\middle\|
P_{t+1,\tilde{\pi}}(\theta,\tilde{\omega})
\right)
\right]
\nonumber\\
&\qquad=
\frac{1}{1-\gamma}
\mathbb E_{x\sim\rho^{\tilde{\pi}}}
\left[d_{\hat{\mathcal D}_N}(x)\right].
\label{eq:appendix_discounted_kl_identity}
\end{align}
Let
\begin{equation}
\label{eq:appendix_policy_information_loss}
\mathcal I^{\tilde{\pi}}
:=
\mathbb E_{x\sim\rho^{\tilde{\pi}}}
\left[d_{\hat{\mathcal D}_N}(x)\right]
\end{equation}
denote the policy-dependent information loss used only as an
intermediate quantity in the proof. Equations~(\ref{eq:appendix_fixed_policy_tv})--
(\ref{eq:appendix_discounted_kl_identity}) imply
\begin{equation}
\label{eq:appendix_fixed_policy_information_bound}
\left|
J^{\tilde{\pi}}(\mathcal M^{\star})-J_B^{\tilde{\pi}}
\right|
\leq
\mathcal R_{\max}
\sqrt{
1-\exp\!\left(
-\frac{\mathcal I^{\tilde{\pi}}}{1-\gamma}
\right)}.
\end{equation}

It remains to connect the policy-dependent quantity in
Eq.~(\ref{eq:appendix_policy_information_loss}) to the
dataset-weighted MPIL. Under the bounded-coverage assumption above,
$d_{\hat{\mathcal D}_N}(x)\geq0$, so a change of measure gives, for
either relevant policy,
\begin{align}
\mathcal I^{\tilde{\pi}}
&=
\int
d_{\hat{\mathcal D}_N}(x)
\frac{d\rho^{\tilde{\pi}}}{d\nu_{\hat{\mathcal D}_N}}(x)
\,d\nu_{\hat{\mathcal D}_N}(x)
\nonumber\\
&\leq
C
\int d_{\hat{\mathcal D}_N}(x)
\,d\nu_{\hat{\mathcal D}_N}(x)
=
C\mathcal I_{\hat{\mathcal D}_N},
\label{eq:appendix_coverage_bridge}
\end{align}
where the last equality is the dataset-weighted MPIL definition in
Eq.~(\ref{PIL}). Substituting Eq.~(\ref{eq:appendix_coverage_bridge})
into Eq.~(\ref{eq:appendix_fixed_policy_information_bound}) for the
two policies in Eq.~(\ref{eq:appendix_two_policy_regret}), and using
the monotonicity of $g$, yields
\begin{align}
\operatorname{Regret}
\left(\mathcal M^{\star},\hat{\mathcal D}_N\right)
&\leq
2\mathcal R_{\max}
\sqrt{
1-\exp\!\left(
-\frac{C\mathcal I_{\hat{\mathcal D}_N}}{1-\gamma}
\right)}.\qedhere
\end{align}
\end{proof}

\subsubsection{Proof of Theorem~\ref{thm2}}
\label{appendix:proof_thm2}

\begin{proof}
Since $1-\exp(-x)\leq x$ for every $x\geq0$, Theorem~\ref{upper-bound} and the uniform bound $C\leq C_{\mathrm{cov}}$ imply
\begin{equation}
\label{eq:appendix_regret_sqrt_mpil}
\operatorname{Regret} \left(\mathcal M^{\star},\hat{\mathcal D}_N\right) \leq 2\mathcal R_{\max} \sqrt{ \frac{C_{\mathrm{cov}}\mathcal I_{\hat{\mathcal D}_N}} {1-\gamma} }.
\end{equation}
Taking expectations over the offline dataset and any algorithmic randomness, when applicable, and applying Jensen's inequality to the concave square-root function gives
\begin{align}
\mathbb E_{\hat{\mathcal D}_N} \left[ \operatorname{Regret} \left(\mathcal M^{\star},\hat{\mathcal D}_N\right)
\right]
&\leq \frac{2\mathcal R_{\max}\sqrt{C_{\mathrm{cov}}}} {\sqrt{1-\gamma}} \mathbb E_{\hat{\mathcal D}_N} \left[\sqrt{\mathcal I_{\hat{\mathcal D}_N}}\right]
\nonumber\\
&\leq \frac{2\mathcal R_{\max}\sqrt{C_{\mathrm{cov}}}} {\sqrt{1-\gamma}} \sqrt{ \mathbb E_{\hat{\mathcal D}_N} \left[\mathcal I_{\hat{\mathcal D}_N}\right] }.
\label{eq:appendix_expected_regret_mpil}
\end{align}
Finally, substituting the assumed contraction rate $\mathbb E_{\hat{\mathcal D}_N}[\mathcal I_{\hat{\mathcal D}_N}] \leq K/N$ into Eq.~(\ref{eq:appendix_expected_regret_mpil}) yields
\begin{equation}
\mathbb E_{\hat{\mathcal D}_N} \left[ \operatorname{Regret} \left(\mathcal M^{\star},\hat{\mathcal D}_N\right)
\right] \leq 2\mathcal R_{\max} \sqrt{ \frac{C_{\mathrm{cov}}K}{(1-\gamma)N} },
\end{equation}
which is Eq.~(\ref{eq:expected_regret_rate}) and completes the proof.
\end{proof}

\clearpage
\subsection{World Model Predictive Performance}
\label{app:wm}

We evaluate next-observation and reward predictions on a held-out test set. Here, the 17/11/17 prediction dimensions correspond to the original D4RL observation vectors for HalfCheetah, Hopper, and Walker2d, respectively, rather than simulator global states or inferred latent states. To further examine whether a reduction in MPIL corresponds to more accurate modeling of the environment dynamics, we compare the next-observation and reward predictions of the initial world model with those of the world model optimized under MPIL guidance on the Walker2d task. As shown in Figure~\ref{prediction}, the dashed line $y=x$ represents ideal predictions, where the model outputs exactly match the ground-truth. For next-observation prediction, although the initial world model generally captures the trends across different observation dimensions, noticeable prediction dispersion remains in some regions. After the MPIL-guided posterior updates, the predictions across observation dimensions become more concentrated around $y=x$, with substantially reduced dispersion. A similar trend is observed for reward prediction. These results are consistent with the decreasing trend of MPIL during training, indicating that lower MPIL reflects improved characterization of next-observation and reward dynamics by the world model, thereby providing a more reliable model foundation for subsequent policy learning based on imagined trajectories.

\begin{figure}[htbp]
    \centering
    \includegraphics[width=1\textwidth]{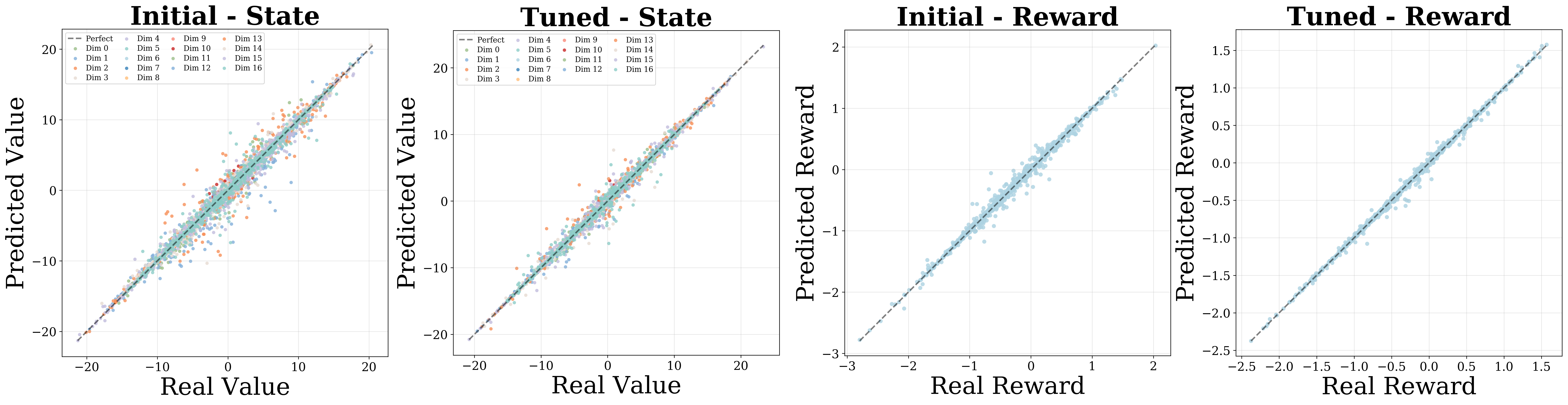}
    \caption{Comparison of World Model Predictions Before and After MPIL-Guided Optimization.}
    \label{prediction}
\end{figure}

\begin{table*}[htbp]
\centering
\caption{Comparison of prediction performance before and after refinement on different MuJoCo tasks.}
\label{tab:world_model_comparison}
\resizebox{\textwidth}{!}{
\begin{tabular}{llcccccc}
\toprule
Task & Dimension & MAE\_Initial& MAE\_Tuned& MAE\_Improvement (\%) & $R^2$\_Inital& $R^2$\_Tuned& $R^2$\_Improvement (\%) \\
\midrule

Halfcheetah & 0      & 0.0029 & 0.0012 & 58.93 & 0.9994 & 0.9999 & 0.05 \\
            & 1      & 0.0054 & 0.0022 & 58.69 & 0.9999 & 1.0000 & 0.00 \\
            & 2      & 0.0076 & 0.0030 & 59.99 & 0.9978 & 0.9996 & 0.18 \\
            & 3      & 0.0088 & 0.0033 & 62.69 & 0.9967 & 0.9995 & 0.28 \\
            & 4      & 0.0101 & 0.0040 & 60.54 & 0.9945 & 0.9991 & 0.45 \\
            & 5      & 0.0100 & 0.0045 & 55.14 & 0.9973 & 0.9993 & 0.20 \\
            & 6      & 0.0093 & 0.0043 & 54.13 & 0.9972 & 0.9992 & 0.20 \\
            & 7      & 0.0099 & 0.0053 & 46.44 & 0.9959 & 0.9985 & 0.26 \\
            & 8      & 0.0641 & 0.0313 & 51.20 & 0.9778 & 0.9928 & 1.53 \\
            & 9      & 0.1036 & 0.0499 & 51.86 & 0.9474 & 0.9793 & 3.37 \\
            & 10     & 0.2063 & 0.1000 & 51.53 & 0.9483 & 0.9807 & 3.41 \\
            & 11     & 0.2736 & 0.1337 & 51.14 & 0.9918 & 0.9973 & 0.55 \\
            & 12     & 0.4100 & 0.1884 & 54.04 & 0.9832 & 0.9956 & 1.26 \\
            & 13     & 0.3691 & 0.1684 & 54.38 & 0.9836 & 0.9957 & 1.23 \\
            & 14     & 0.3865 & 0.2009 & 48.03 & 0.9871 & 0.9951 & 0.81 \\
            & 15     & 0.4084 & 0.2355 & 42.34 & 0.9873 & 0.9932 & 0.60 \\
            & 16     & 0.3936 & 0.2276 & 42.18 & 0.9803 & 0.9904 & 1.03 \\
            & Reward & 0.0380 & 0.0180 & 52.72 & 0.9903 & 0.9973 & 0.71 \\

Hopper      & 0      & 0.0054 
& 0.0006
& 88.17 & 0.1732 & 0.9791 & 465.28 \\
            & 1      & 0.0079 
& 0.0007
& 90.73 & 0.8290 & 0.9918 & 19.64 \\
            & 2      & 0.0084 
& 0.0015
& 82.22 & 0.8145 & 0.9720 & 19.34 \\
            & 3      & 0.0093 
& 0.0011
& 88.74 & 0.8433 & 0.9921 & 17.64 \\
            & 4      & 0.0219 
& 0.0014
& 93.51 & 0.6166 & 0.9946 & 61.30 \\
            & 5      & 0.0401 
& 0.0051
& 87.41 & 0.9029 & 0.9961 & 10.32 \\
            & 6      & 0.0411 
& 0.0049
& 87.99 & 0.7564 & 0.9951 & 31.55 \\
            & 7      & 0.1453 
& 0.0164
& 88.73 & 0.9245 & 0.9960 & 7.74 \\
            & 8      & 0.1492 
& 0.0153
& 89.72 & 0.8710 & 0.9964 & 14.40 \\
            & 9      & 0.1367 
& 0.0135
& 90.11 & 0.8883 & 0.9972 & 12.26 \\
            & 10     & 0.2005 
& 0.0153
& 92.37 & 0.8473 & 0.9976 & 17.74 \\
            & Reward & 0.0280 & 0.0024& 91.50 & 0.9671 & 0.9999 & 3.39 \\

Walker2d    & 0      & 0.0077 
& 0.0023
& 70.09 & 0.7090 & 0.9576 & 35.06 \\
            & 1      & 0.0525 
& 0.012
& 77.11 & 0.5628 & 0.9598 & 70.55 \\
            & 2      & 0.0451 
& 0.0109
& 75.78 & 0.5502 & 0.9674 & 75.84 \\
            & 3      & 0.0349 
& 0.0129
& 62.93 & 0.8363 & 0.9758 & 16.68 \\
            & 4      & 0.0475 
& 0.0222
& 53.16 & 0.8854 & 0.9833 & 11.06 \\
            & 5      & 0.0449 
& 0.0111
& 75.37 & 0.5612 & 0.9656 & 72.04 \\
            & 6      & 0.0342 
& 0.0118
& 65.51 & 0.8541 & 0.9791 & 14.63 \\
            & 7      & 0.0428 
& 0.0211
& 50.75 & 0.9016 & 0.9855 & 9.31 \\
            & 8      & 0.0881 
& 0.0265
& 69.99 & 0.8335 & 0.9806 & 17.65 \\
            & 9      & 0.1897 
& 0.0529
& 72.11 & 0.6446 & 0.9651 & 49.73 \\
            & 10     & 1.0088 
& 0.2438
& 75.83 & 0.7850 & 0.9815 & 25.02 \\
            & 11     & 1.0388 
& 0.3302
& 68.21 & 0.8802 & 0.9854 & 11.96 \\
            & 12     & 0.8285 
& 0.3426
& 58.65 & 0.9159 & 0.9837 & 7.40 \\
            & 13     & 1.4659 
& 0.882
& 39.83 & 0.8874 & 0.9539 & 7.50 \\
            & 14     & 0.9653 
& 0.3115
& 67.73 & 0.8907 & 0.9862 & 10.72 \\
            & 15     & 0.8146 
& 0.3546
& 56.47 & 0.9144 & 0.9847 & 7.68 \\
            & 16     & 1.3732 
& 0.7956
& 42.06 & 0.8975 & 0.9590 & 6.85 \\
            & Reward & 0.0296 & 0.0095& 67.80 & 0.9897 & 0.9992 & 0.95 \\
\bottomrule
\end{tabular}
}
\end{table*}

\clearpage
\subsection{Policy Performance}

\begin{figure}[!ht]
\centering
\begin{subfigure}{0.48\textwidth}
        \centering
        \includegraphics[width=\linewidth]
        {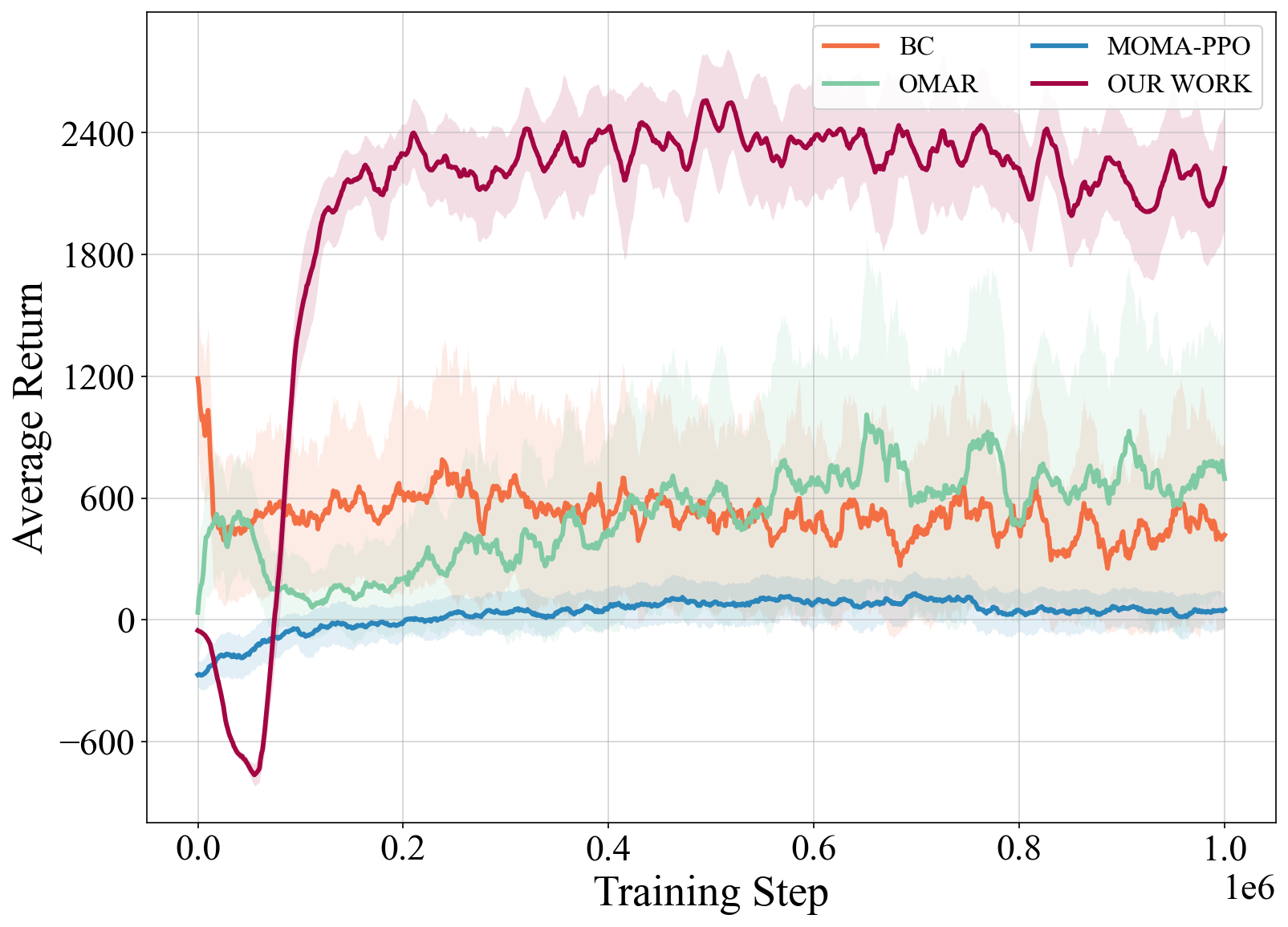}
        \caption{HalfCheetah Expert}
    \end{subfigure}
\hfill
\begin{subfigure}{0.48\textwidth}
        \centering
        \includegraphics[width=\linewidth]
        {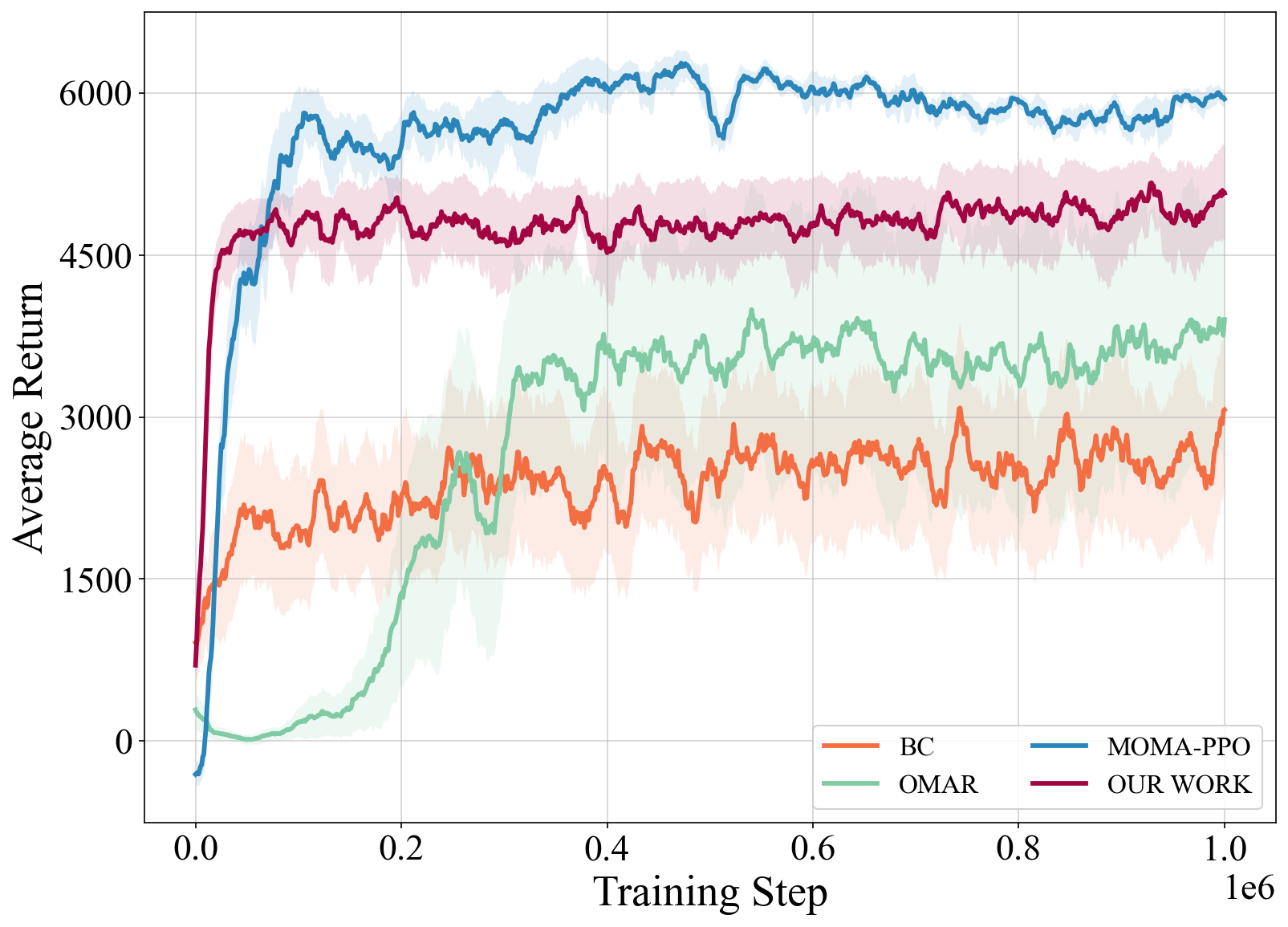}
        \caption{HalfCheetah Full}
    \end{subfigure}

\vspace{0.6em}

\begin{subfigure}{0.48\textwidth}
        \centering
        \includegraphics[width=\linewidth]
        {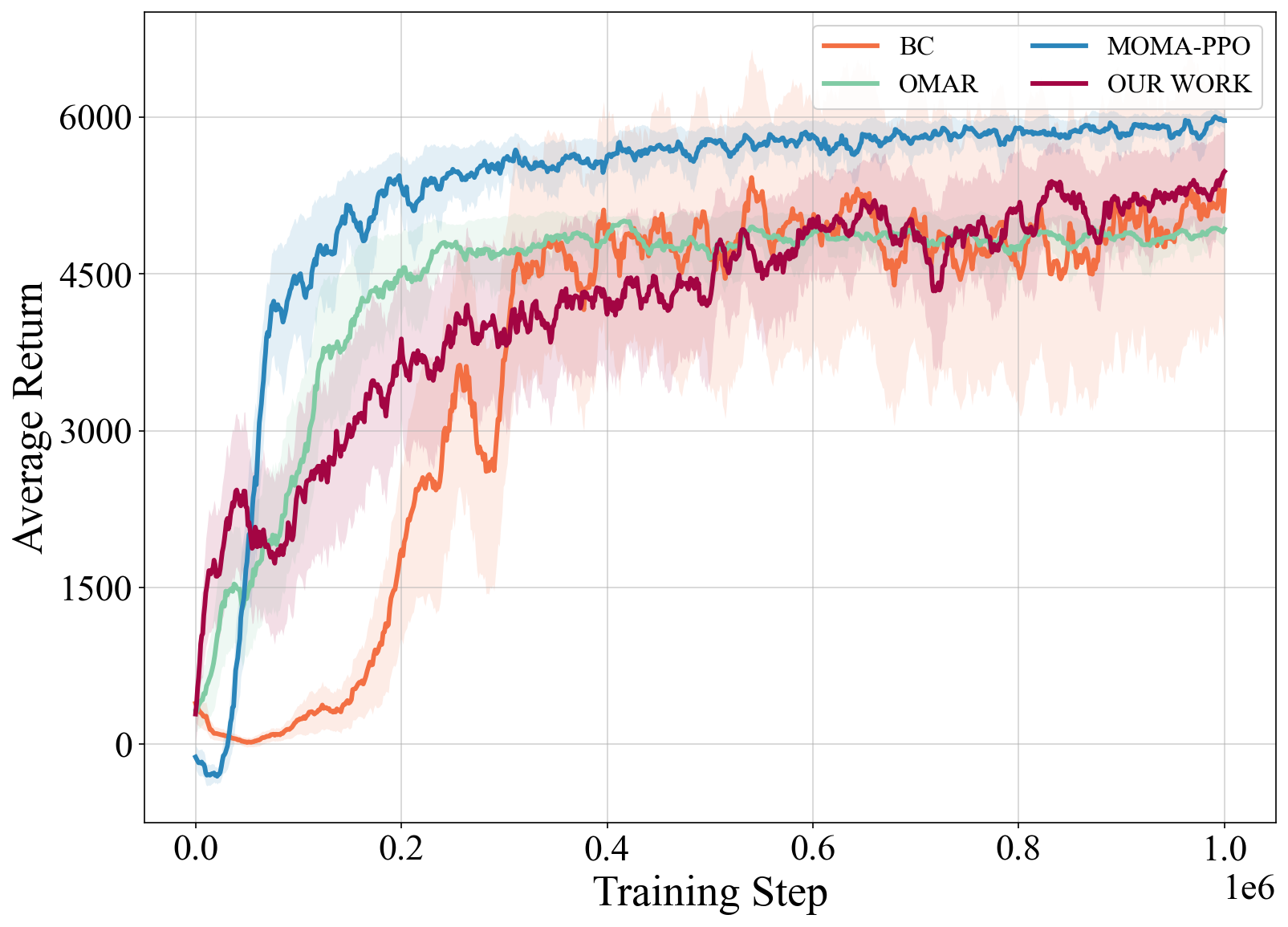}
        \caption{HalfCheetah Medium}
    \end{subfigure}
\hfill
\begin{subfigure}{0.48\textwidth}
        \centering
        \includegraphics[width=\linewidth]
        {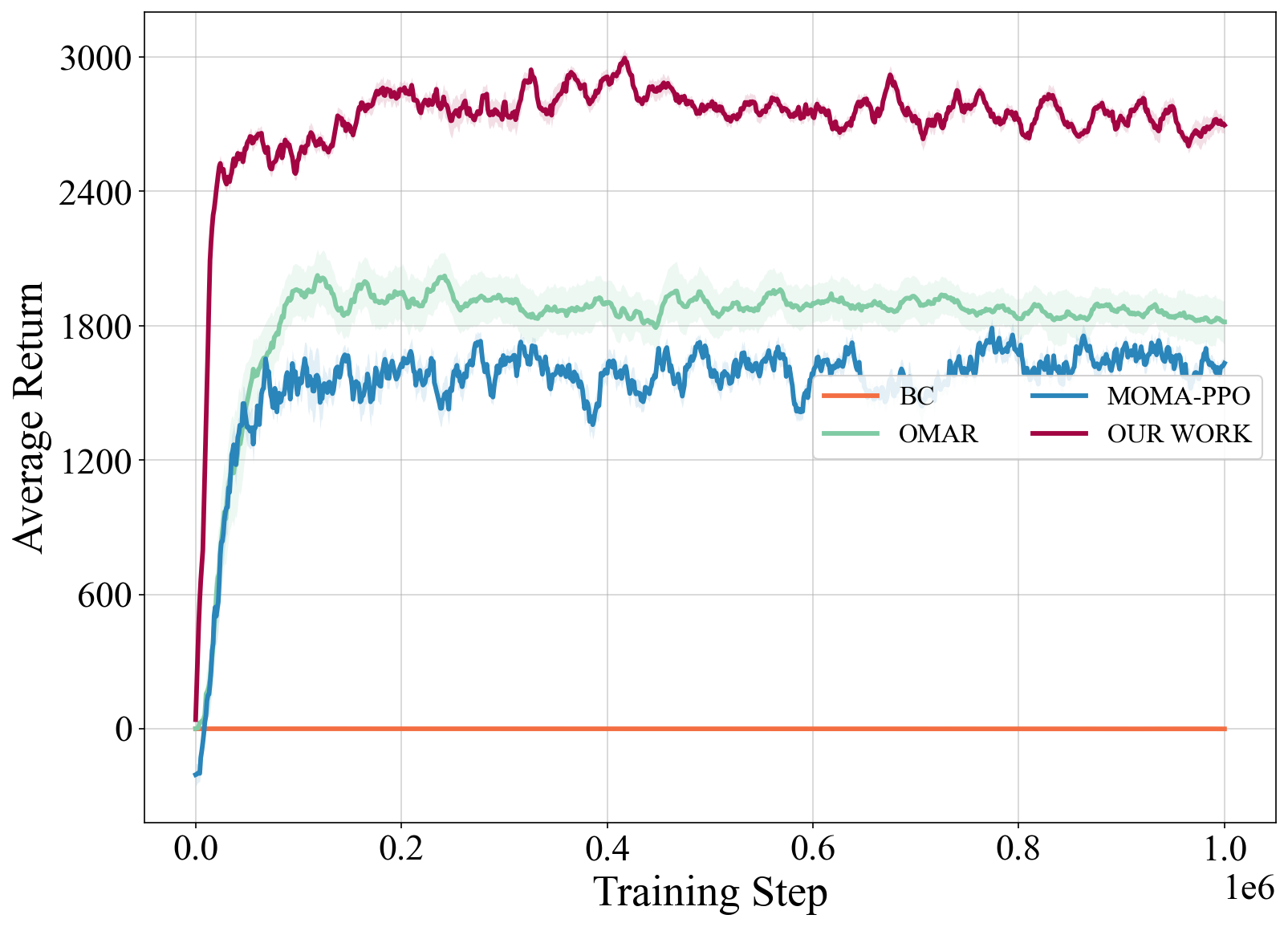}
        \caption{HalfCheetah Random}
    \end{subfigure}

\vspace{0.6em}

\begin{subfigure}{0.48\textwidth}
        \centering
        \includegraphics[width=\linewidth]
        {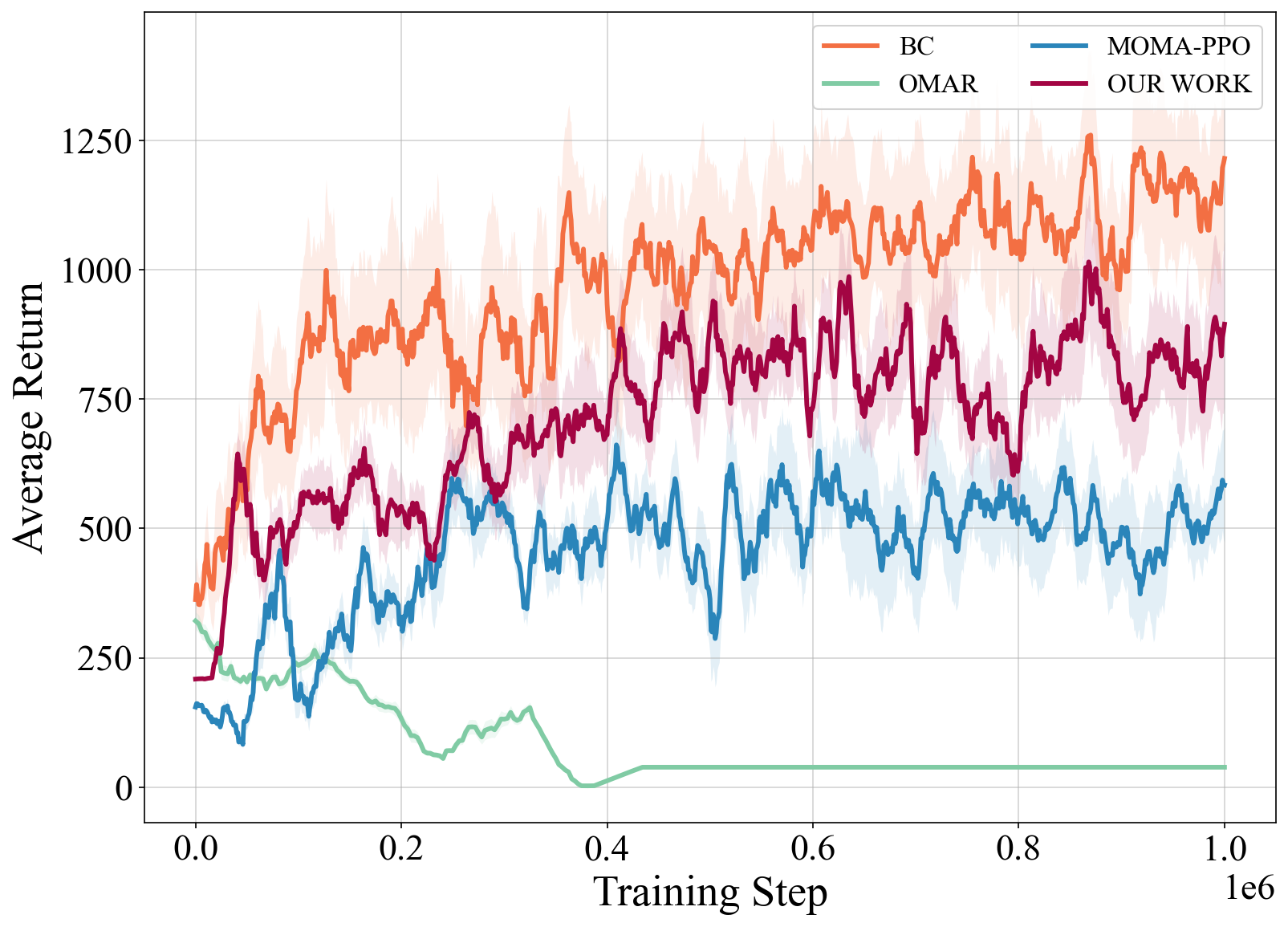}
        \caption{Hopper Expert}
    \end{subfigure}
\hfill
\begin{subfigure}{0.48\textwidth}
        \centering
        \includegraphics[width=\linewidth]
        {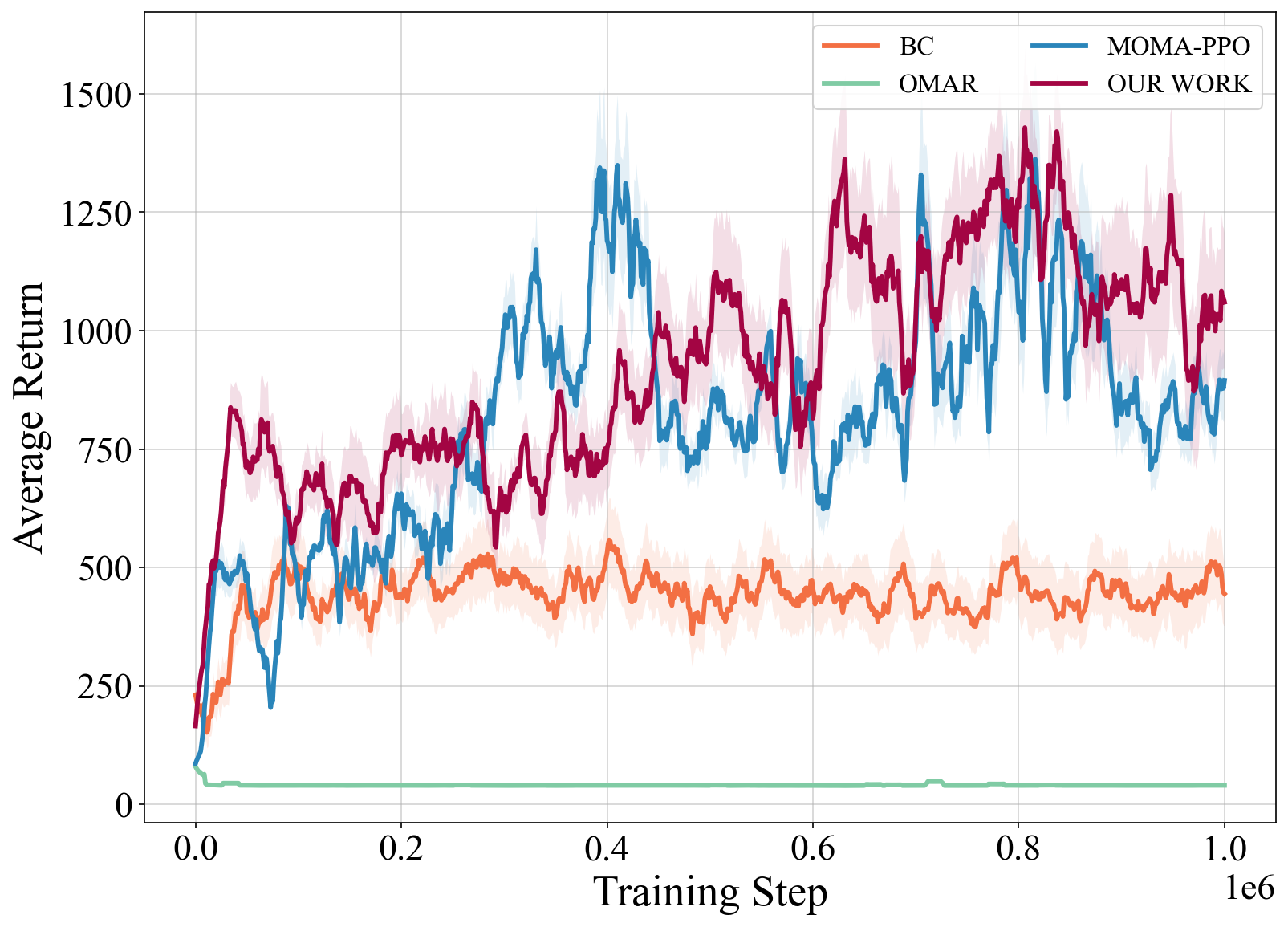}
        \caption{Hopper Full}
    \end{subfigure}
\caption{Performance comparison across different environments and dataset qualities (panels a--f).}
\label{fig:all_results}
\end{figure}

\clearpage
\begin{figure}[!t]
\ContinuedFloat
\centering
\begin{subfigure}{0.48\textwidth}
        \centering
        \includegraphics[width=\linewidth]
        {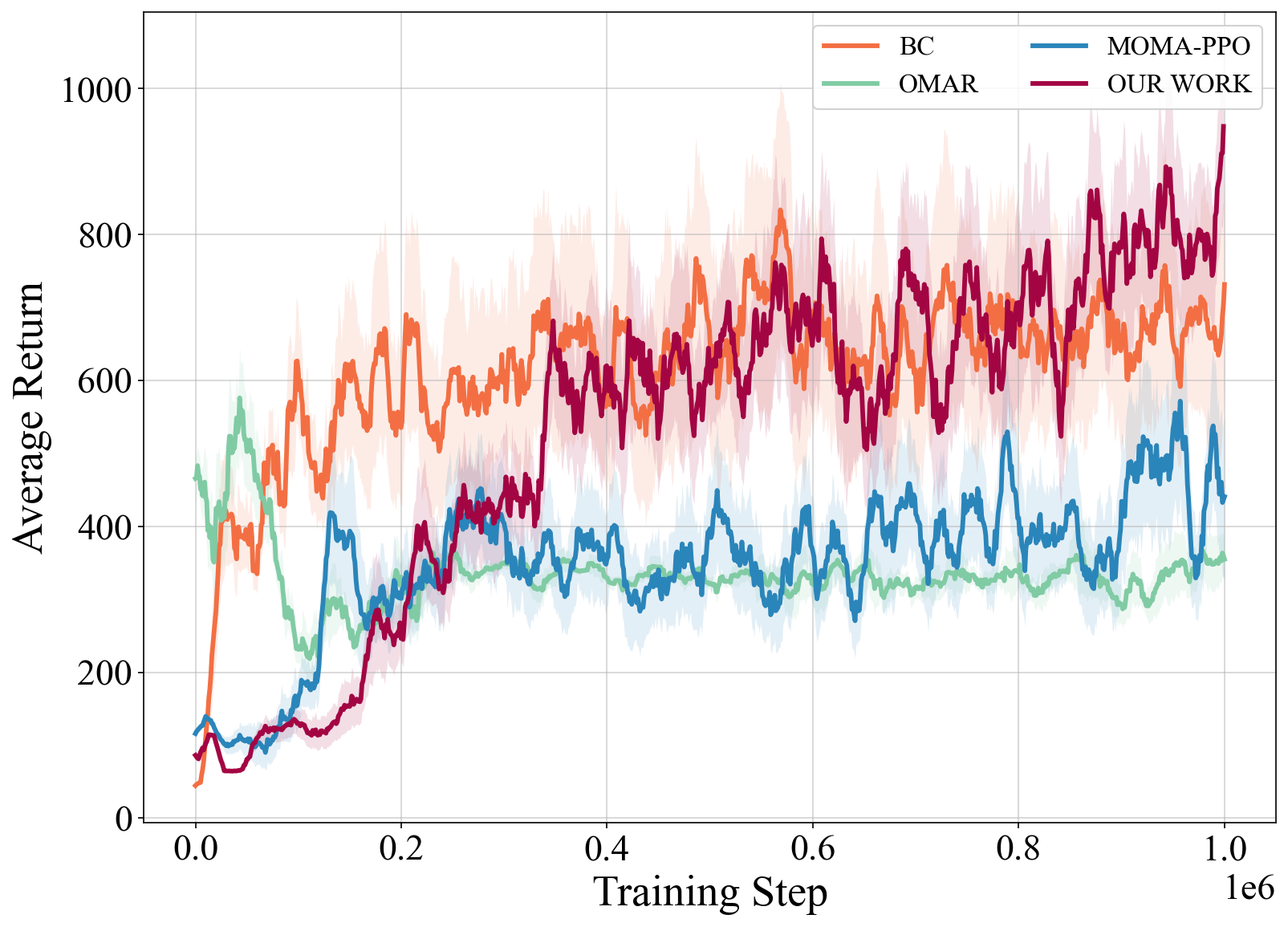}
        \caption{Hopper Medium}
    \end{subfigure}
\hfill
\begin{subfigure}{0.48\textwidth}
        \centering
        \includegraphics[width=\linewidth]
        {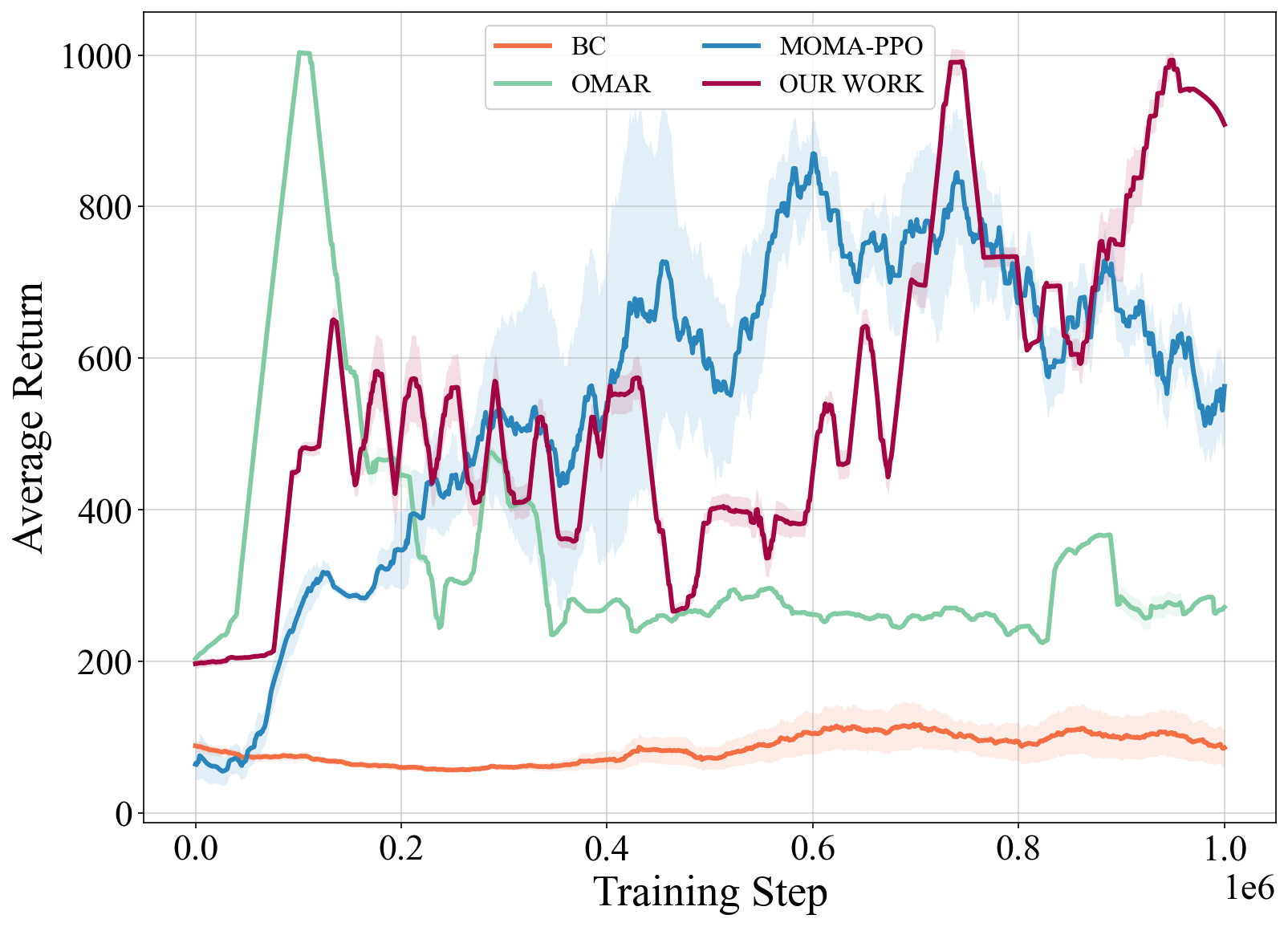}
        \caption{Hopper Random}
    \end{subfigure}

\vspace{0.6em}

\begin{subfigure}{0.48\textwidth}
        \centering
        \includegraphics[width=\linewidth]
        {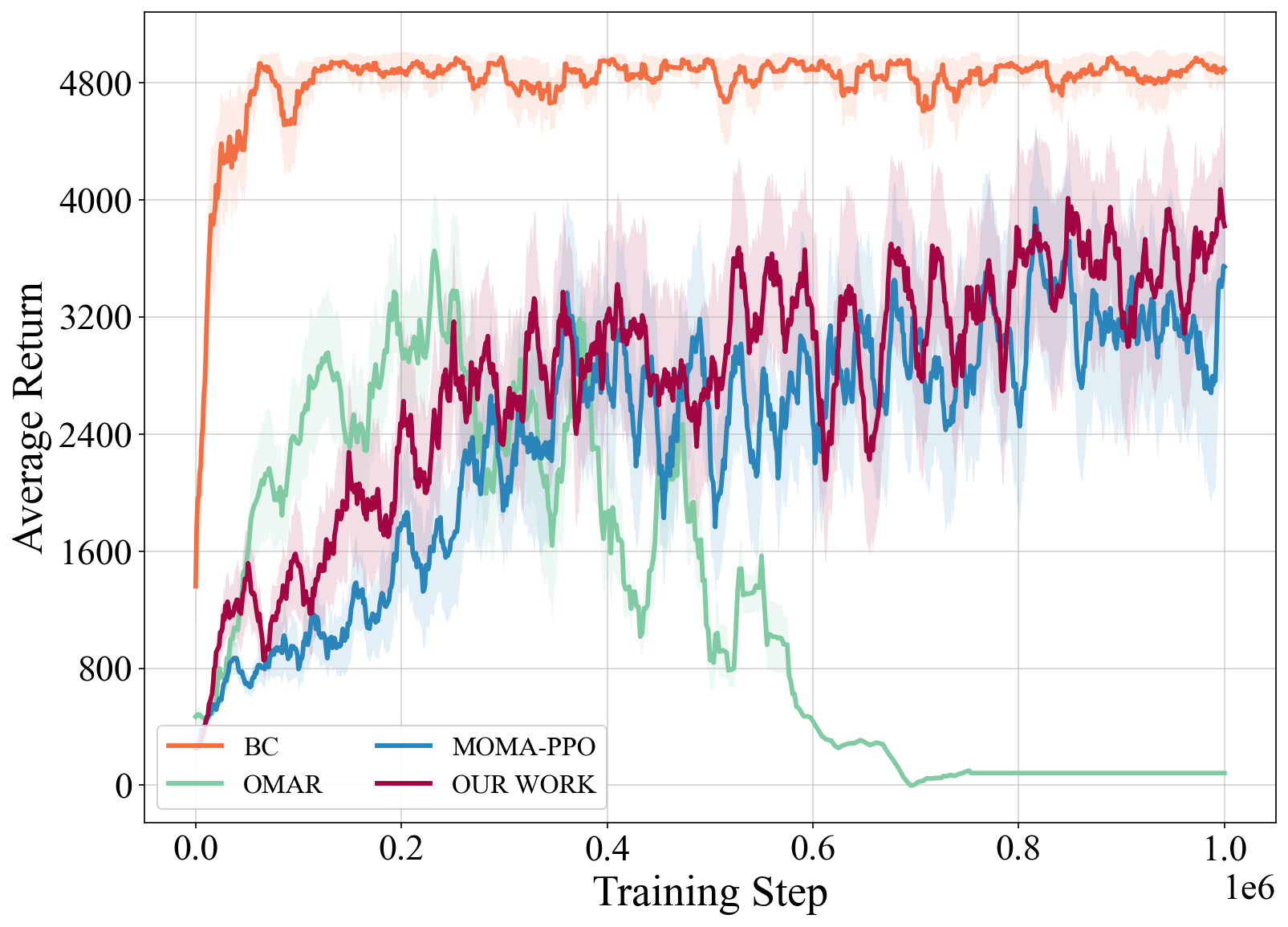}
        \caption{Walker2d Expert}
    \end{subfigure}
\hfill
\begin{subfigure}{0.48\textwidth}
        \centering
        \includegraphics[width=\linewidth]
        {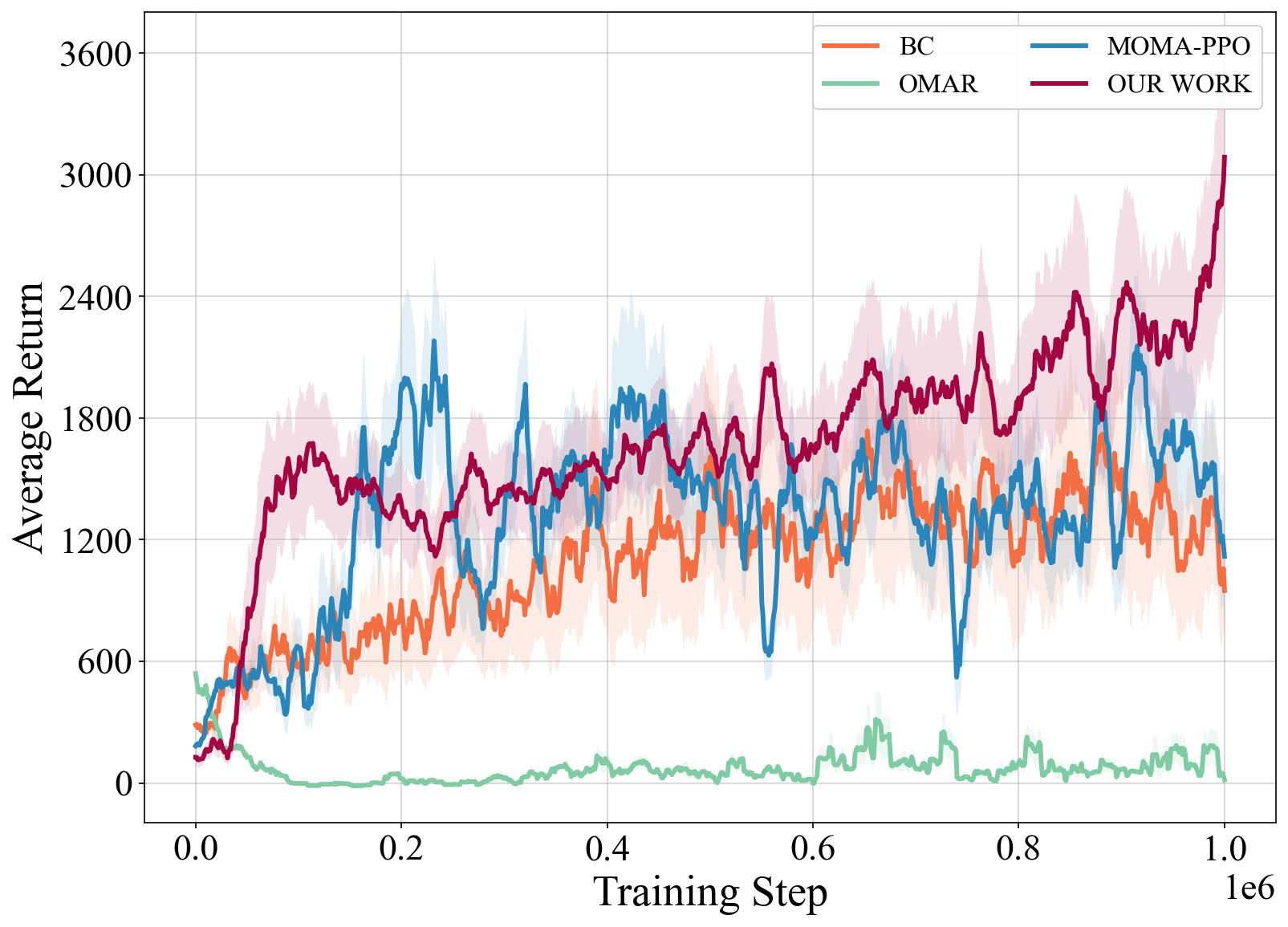}
        \caption{Walker2d Full}
    \end{subfigure}

\vspace{0.6em}

\begin{subfigure}{0.48\textwidth}
        \centering
        \includegraphics[width=\linewidth]
        {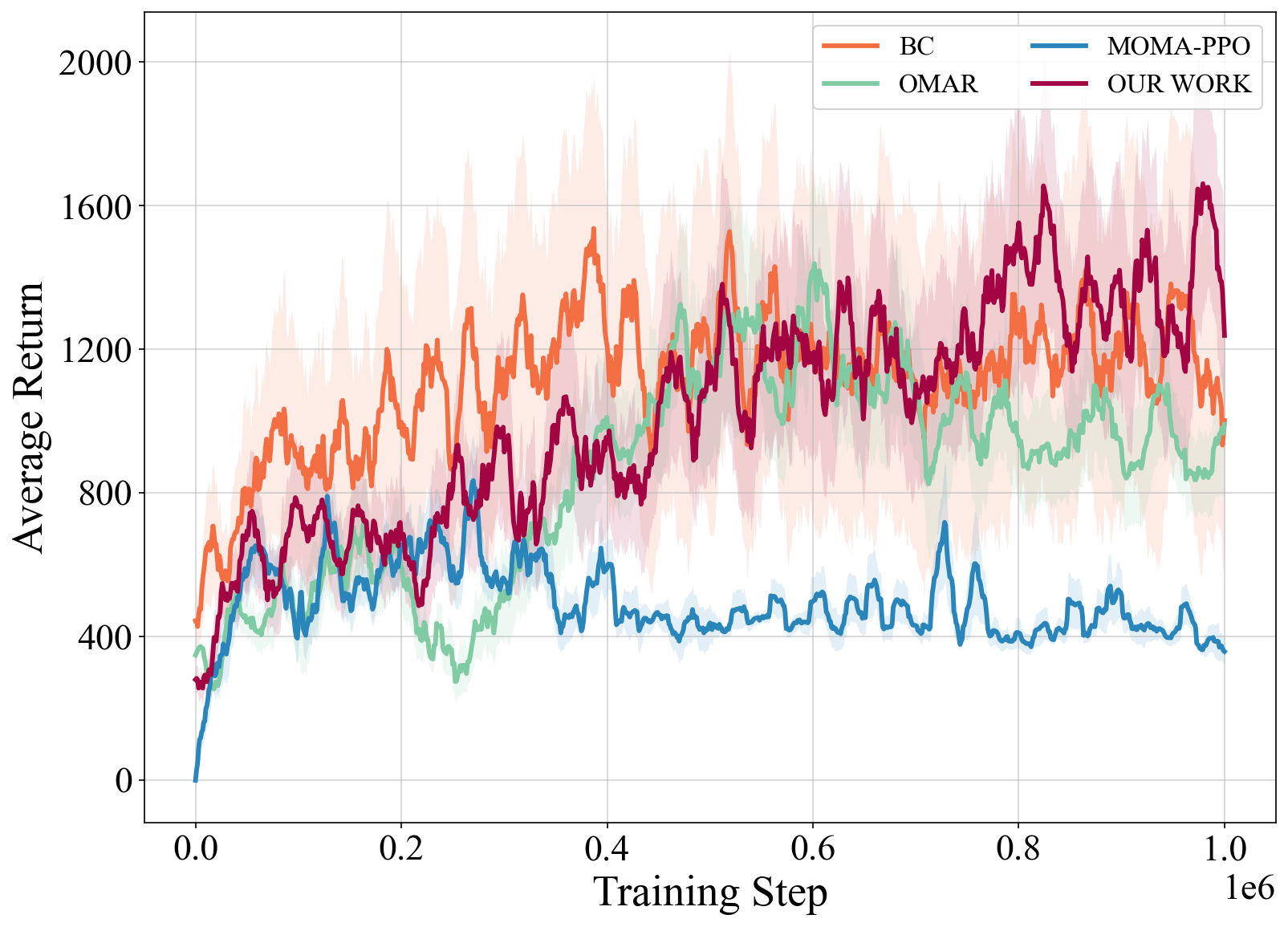}
        \caption{Walker2d Medium}
    \end{subfigure}
\hfill
\begin{subfigure}{0.48\textwidth}
        \centering
        \includegraphics[width=\linewidth]
        {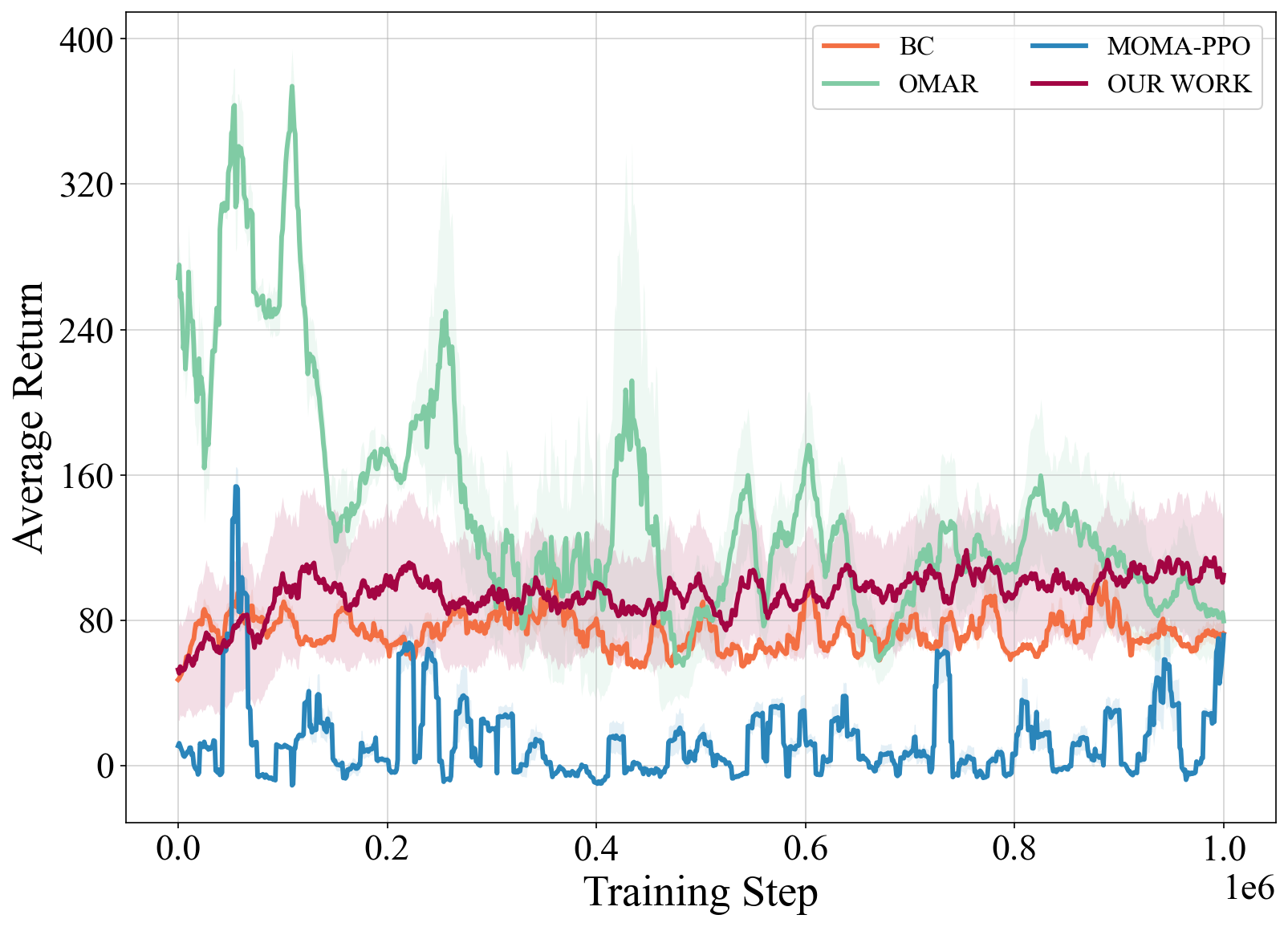}
        \caption{Walker2d Random}

    \end{subfigure}
\caption{Performance comparison across different environments and dataset qualities (continued; panels g--l).}
\end{figure}